\newcommand{\bea}{\begin{eqnarray}}
\newcommand{\eea}{\end{eqnarray}}
\newcommand{\beq}{\begin{equation}}
\newcommand{\eeq}{\end{equation}}
\newcommand{\nn}{\nonumber}
\newcommand{\avg}[1]{\langle #1 \rangle_{\nu}}
\newcommand{\leaveout}[1]{}
\newcommand{\half}{{\frac{1}{2}}}
\newcommand{\dH}{{d_H}}
\newcommand{\dS}{{d_S}}
\newcommand{\dhh}{{d_h}}
\newcommand{\dhhb}{{\bar{d}_h}}
\newcommand{\ds}{{d_s}}
\newcommand{\C}[1]{{\mathcal{#1}}}
\newcommand{\Prob}{{\mathbb{P}}}
\newcommand{\G}[1]{\Gamma\left(#1\right)}
\newcommand{\measet}[1]{\nu( \{ #1\})}
\newtheorem{theorem}{Theorem}
\newtheorem{lemma}[theorem]{Lemma}
\newtheorem{proposition}[theorem]{Proposition}
\newenvironment{proof}[1][Proof]{\begin{trivlist}
\item[\hskip \labelsep {\bfseries #1}]}{\end{trivlist}}
\newenvironment{example}[1][Example]{\begin{trivlist}
\item[\hskip \labelsep {\bfseries #1}]}{\end{trivlist}}
\newenvironment{conjecture}[1][Conjecture]{\begin{trivlist}
\item[\hskip \labelsep {\bfseries #1}]}{\end{trivlist}}
\begin{document}

\thispagestyle{empty}
\begin{center}
\vspace{48pt}
{ \Large \bf Spectral dimension of trees \\ with a unique infinite spine}

\vspace{40pt}
{\sl Sigurdur \"O. Stef\'ansson}$^{1}$
and {\sl Stefan Zohren}$^{2}$  
\vspace{24pt}

{\small

$^1$~NORDITA, Roslagstullbacken 23,\\
ES-106 91 Stockholm, Sweden

\vspace{10pt}

$^2$~Rudolf Peierls Centre for Theoretical Physics, University of Oxford,\\
1 Keble Road, Oxford OX1 3NP, UK.

\vspace{10pt}

Email: {\tt sigste@nordita.org} and {\tt zohren@physics.ox.ac.uk}

\vspace{30pt}
}

\end{center}

\begin{abstract} 
Using generating functions techniques we develop a relation between the Hausdorff and spectral dimension of trees with a unique infinite spine. Furthermore, it is shown that if the outgrowths along the spine are independent and identically distributed, then both the Hausdorff and spectral dimension can easily be determined from the probability generating function of the random variable describing the size of the outgrowths at a given vertex, provided that the probability of the height of the outgrowths exceeding $n$ falls off as the inverse of $n$. We apply this new method to both critical non-generic trees and the attachment and grafting model, which is a special case of the vertex splitting model, resulting in a simplified proof for the values of the Hausdorff and spectral dimension for the former and novel results for the latter. \\  \\
{\em PACS numbers: 04.60.Nc,04.60.Kz,04.60.Gw}
\end{abstract}


\vspace{1cm}
Submitted to: {\em J.\ Stat.\ Phys.}


\newpage
\tableofcontents
\newpage

%
%

\section{Introduction}

Random trees and their properties have been studied in several branches of probability theory, mathematical physics and science in general. Their applications span from phylogenetic trees \cite{ald:1996,ford:2005} and random folding of RNA molecules \cite{david:2008} to models of quantum gravity \cite{Ambjorn:1997di,Sch:1998,leGall2005,DiFrancesco:2004qj}, amongst others. 

In general terms a random tree ensemble is a set of trees together with a probability measure associated to it. We  focus on a class of infinite, rooted trees which have the property that there is a unique, non-backtracking path from the root to infinity. This path will be referred to as {\it the spine}. In this case the ensemble is characterized by the distribution of the finite outgrowths from the spine. Such ensembles can arise in very different contexts: One example considered in this article is the equilibrium statistical mechanics model of \emph{simply generated trees}, first introduced in \cite{meir:1978}, which describes trees with a local action. These random tree ensembles are related to critical and sub-critial \emph{Galton-Watson processes} \cite{Athreya1972}. The second example considered is the \emph{attachment and grafting model} \cite{stefansson:2011a} which is a specific case of the \emph{vertex splitting model} \cite{david:2009}. This random tree ensemble is very different in nature from the simply generated trees in the sense that it arises from a growth process with non-local action which results in a non-equilibrium statistical mechanics model. An effect similar to the emergence of a unique infinite spine is also observed in triangulation models in quantum gravity where a unique large ``universe'' emerges with finite baby--universes attached, see e.g.~\cite{Angel:2002ta}.


A simple observable of a random tree ensemble is its \emph{Hausdorff or fractal dimension} defined as $d_h$ provided that the number of edges within a distance $R$ from the root, denoted by $|\C{B}_R|$, scales as 
\beq
|\C{B}_R|\approx R^{d_h} \quad \mathrm{as}~ R \rightarrow \infty.
\eeq
The meaning of ``$\approx$`` is that both sides are asymptotically the same in a sense which will be made precise in Section \ref{s:spectraldim}.  Another notion of dimensionality of the random tree ensemble is the so-called \emph{spectral dimension}. Let $p(t)$ be the probability that a simple random walk on a tree which starts at the root at time 0 is back at the root at time $t$. If
\beq
p(t) \approx t^{-d_s/2} \quad \mathrm{as}~ t \rightarrow \infty
\eeq 
we say that the spectral dimension of the tree is $d_s$.  The spectral dimension of various classes of graphs has been studied by probabilists \cite{coulhon:2005,Barlow2006,Croydon:2008,fujii:2008,barlow:2009} and by physicists, especially in the quantum gravity literature, see e.g.~\cite{Ambjorn:1997jf,Jonsson:1997gk,Ambjorn:2005db,durhuus:2009b}. Coulhon \cite[Theorem 7.7]{Coulhon2000} derived that one has 
\beq \label{eq:Cou}
\frac{2\dhh}{1+\dhh}\leq\ds\leq \dhh
\eeq
for fixed graphs under certain regularity assumptions. As we will see in the following, random tree ensembles with an infinite spine are tight with the left-hand-side of the inequality and we show below how one can obtain an improved inequality for those tree ensembles which is tight on both sides for many examples.
 
 
We develop methods which relate the diffusion of the random walk to properties of the volume distribution of the finite outgrowths. An important concept is the \emph{hull dimension} defined as $\bar{d}_h$, provided that the total volume of the \emph{hull} $\bar{\C{B}}_R$, the outgrowths from vertices on the spine within distance $R$ from the root, scales as 
\beq
|\bar{\C{B}}_R|\approx R^{\bar{d}_h} \quad \mathrm{as}~ R \rightarrow \infty 
\eeq
see Figure \ref{f:ball}. 

Using simple generating function techniques we begin by proving in Theorem \ref{thmspectral} that the following bounds hold for trees with a unique infinite spine
\beq \label{eq:maininequality}
\frac{2\dhh}{1+\dhh}\leq\ds\leq\frac{2\dhhb}{1+\dhhb} 
\eeq
which, in many examples of interest, tightens the general bound \eqref{eq:Cou} for fixed graphs. Intuitively, if the probability that the outgrowths reach height $n$ decays fast enough with $n$, $d_h$ and $\bar{d}_h$ should be close in which case (\ref{eq:maininequality}) gives a tight bound on $d_s$. We make this a precise statement in Theorem \ref{thm:dh} in the case when the outgrowths from different vertices on the spine are independent and identically distributed ($i.i.d.$). There, we prove that if the random variable $|\C{A}^i|$, denoting the size of the  outgrowths from a vertex on the spine, has a probability generating function
 $\langle z^{|\C{A}^i|}\rangle = 1-(1-z)^{\alpha} l(1-z),$
where $l(x)$ varies slowly at zero, then almost surely
\beq\label{resII}
\dhh\leq\frac{1}{\alpha},\quad \quad \ds\leq\frac{2}{1+\alpha}.
\eeq
Furthermore, we show that if the probability that the height of the outgrowths from a vertex on the spine exceeds $n$ falls off as the inverse of $n$ then one has equality in \eqref{resII}.


Up to that point, our results are very general and we devote the rest of the article to applying the results to specific models. In Section \ref{s:nongeneric} we calculate the spectral dimension of the so--called non--generic, critical phase of simply generated trees, giving an alternative proof to the recent results by Croydon and Kumagai \cite{Croydon:2008}. We note here that our result is weaker in the sense that we define $d_s$ through the singular behaviour of the generating function of $p(t)$, rather than from the asymptotic behaviour of $p(t)$. In Section \ref{s:ag} we calculate the Hausdorff and spectral dimension of the attachment and grafting model for a certain parameter range, extending results in \cite{stefansson:2011a}. We leave the generalisation to the full range of parameters as an open problem.

\section{Hausdorff and spectral dimension of trees with infinite spine}  \label{s:spectraldim}

\subsection{Trees with infinite spine and their ball and hull dimension}

Let us denote by $\Gamma$ the set of all planar rooted locally finite trees, where the root has valency one.\footnote{Note that the assumption that the root has valency one is for conventional reasons and the analysis which follows could in principle also been done without it.} The planarity condition simply means that edges containing the same vertex are ordered around that vertex. One has $\Gamma=\Gamma'\cup\Gamma^\infty$, where $\Gamma'$ is the set of finite such trees and  $\Gamma^\infty$ the set of infinite such trees. We often denote elements of $\Gamma'$ by $T,T_1$ etc., while elements of $\Gamma^\infty$ are called $\tau,\tau_1$ etc. For $T\in\Gamma'$, let $|T|$ be the size of the tree, i.e.~the number of edges in $T$. Let $r$ denote the root of the tree. We consider in this article trees which have a unique, non--backtracking path from the root to infinity and call this path {\it the spine}. Denote the set of such trees by $\Gamma_S^\infty\subset\Gamma^\infty$. Furthermore, denote the vertices on the spine, ordered away from the root $r$, by $s_1,s_2,s_3,...$. We also denote the valency of a vertex $v$ by $\sigma(v)$, i.e. for the root $r$ one has $\sigma(r)=1$. 

For a given $\tau\in\Gamma$ we denote by $\C{B}_R\subset \tau$ the \emph{ball} of radius $R$ around the root, this is the subgraph of $\tau$ spanned by the vertices which have a graph distance from the root which is less than or equal to $R$. Another important concept in the article is the notion of the \emph{hull} $\bar{\C{B}}_R$. For a  $\tau\in\Gamma$ the hull $\bar{\C{B}}_R$ denotes the subgraph which is the union of the spine from $r$ up to vertex $s_R$ and all finite trees attached to it. The concepts of ball and hull are illustrated in Figure \ref{f:ball}; it is easy to see that $\C{B}_R\subseteq\bar{\C{B}}_R$.

A random tree ensemble $(\Gamma,\nu)$ is a set of trees $\Gamma$ equipped with a probability measure $\nu$. An important notion of dimensionality of a random tree ensemble, as discussed in the introduction, is the so-called Hausdorff or fractal dimension, which describes the growth of the size of a ball of radius $R$ and can formally be defined as
\beq \label{defdhh}
\dhh=\lim_{R\to\infty} \frac{\log |\C{B}_R|}{\log R}.
\eeq
Throughout the paper we consider the following strong criterion for existence of $d_h$: There exists an $R_0>0$ such that for $R>R_0$
\beq\label{existencedh}
R^\dhh \underline{L}(R)<|\C{B}_R|< R^\dhh \bar{L}(R),
\eeq
where $ \underline{L}(R), \bar{L}(R)$ are functions which vary slowly at infinity. A real--valued, positive, measurable function $L$ is said to vary slowly at infinity if for each $\lambda > 0$
\begin {equation}
 \lim_{R\rightarrow\infty} \frac{L(\lambda R)}{L(R)} = 1.
\end {equation}
We refer to \cite{seneta:1976} for some properties of slowly varying functions and in particular the fact that (\ref{existencedh}) implies (\ref{defdhh}).

If for a random tree ensemble $(\Gamma,\nu)$ one has that \eqref{defdhh} holds for $\nu$-almost all $\tau\in\Gamma$ then we call $\dhh$ the \emph{quenched Hausdorff dimension}. Furthermore, one can also define the \emph{annealed Hausdorff dimension} through the expected size of the ball of radius $R$
\beq \label{defdH}
\dH=\lim_{R\to\infty} \frac{\log \avg{|\C{B}_R|}}{\log R}.
\eeq
Furthermore, we say that the quenched Hausdorff dimension $\dhh$ exists in the sense of \eqref{existencedh} if the latter is fulfilled for $\nu$-almost all $\tau\in\Gamma$. 

In analogy, we also define the \emph{quenched hull dimension}, if for $\nu$-almost all $\tau\in\Gamma$ 
 \beq \label{defdhhb}
\dhhb=\lim_{R\to\infty} \frac{\log |\bar{\C{B}}_R|}{\log R}
\eeq
where existence is again provided through \eqref{existencedh} analogous to the Hausdorff dimension.
From the fact that $\C{B}_R\subseteq\bar{\C{B}}_R$ it is then also clear that one has $\dhh\leq\dhhb$.

\begin{figure}[t]
\centerline{\scalebox{0.32}{\includegraphics{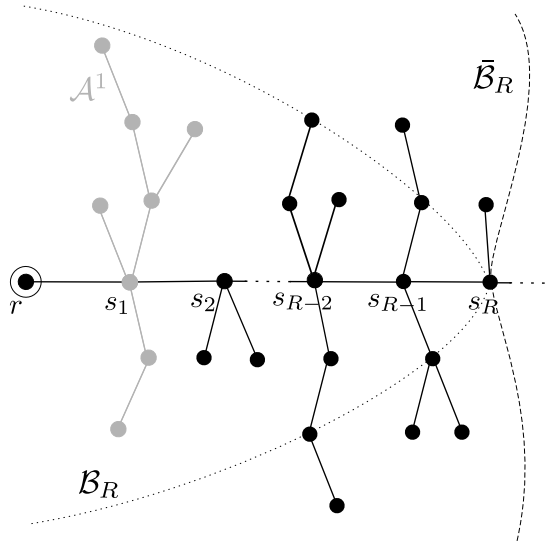}}}
\caption{An illustration of the ball and the hull in a tree with a unique infinite spine started at the root $r$ and labeled by vertices $s_1,s_2, ...$ away from the root. The ball $\C{B}_R$ is the subgraph of all vertices within distance $R$ from the root. The hull $\bar{\C{{B}}}_R$ is composed of the spine up to distance $R$ from the root and the finite outgrowths from its vertices $s_1,...,s_R$. The finite outgrowths from vertex $s_i$ are denoted by $\C{A}^i$.}\label{f:ball}
\end{figure}


\subsection{Random walks on trees and spectral dimension}

In the following, we consider simple, discrete time random walks on trees i.e.~walks which, in each discrete time step, jump to a nearest neighbour with uniform probability. For a given $\tau\in\Gamma$ we denote by $\Omega_\tau$ the set of finite walks on $\tau$. For a walk $\omega\in\Omega_\tau$ we denote by $|\omega|$ the length of the walk and for $t\leq|\omega|$, $\omega(t)$ denotes the vertex where the $\omega$ is located after $t$ steps. We now formulate generating functions for the return probabilities of random walks on a given tree $\tau$.

An important quantity is the generating function for the probability of first return to the root $r$ given by
\beq P_\tau(x)=\sum_{t=0}^\infty\sum_{\omega\in\Omega_\tau: |\omega|=t} \Prob_\tau\left(\{\omega(t)=r\,\vert\, \omega(0)=r, \omega(t')\neq r, 0<t'<t\}\right)(1-x)^{\half t}
\eeq
and for all returns to a starting point which is given by
\beq Q_\tau(x)=\sum_{t=0}^\infty\sum_{\omega\in\Omega_\tau: |\omega|=t} \Prob_\tau\left(\{\omega(t)=r\, \vert \, \omega(0)=r\right\})\,(1-x)^{\half t}.
\eeq
By decomposing a walk which returns to the root into a first return, second return etc., the two generating functions can be related through
\beq \label{QPrel}
Q_\tau(x) = \sum_{n=0}^\infty \left(P_\tau(x)\right)^n = \frac{1}{1-P_\tau(x)}.
\eeq


The generating function $Q_\tau$ encodes all information of the asymptotic of the return probability and one can extract the spectral dimension from it. In particular, for a given $\tau\in\Gamma$ we define the spectral dimension through
\beq \label{Qdefds}
\ds = 2\left( 1+\lim_{x\to0} \frac{\log Q_\tau(x) }{\log x}  \right),
\eeq
provided that $Q_\tau(x)$ diverges as $x\to0$. As for the Hausdorff dimension we also consider the strong criterion for existence, that there exists an $x_0\in(0,1)$ such that for $x<x_0$
\beq \label{existenceds}
 x^{-1+\ds/2} \underline{l}(x) < Q_\tau(x)< x^{-1+\ds/2} \bar{l}(x)
\eeq
where $ \underline{l}(x)$ and $\bar{l}(x)$ are functions which vary slowly at $x\to0^+$.
Notice that the definition of $d_s$ through \eqref{Qdefds} is slightly weaker than the more common definition
\beq \label{Qdefds2}
\ds = -2 \lim_{t\to\infty} \frac{\log \Prob\left(\{\omega(2t)=r\, \vert \, \omega(0)=r\right\}) }{\log t}.
\eeq

For an ensemble $(\Gamma,\nu)$ we say it has \emph{quenched spectral dimension} $d_s$ if \eqref{Qdefds} is fulfilled for $\nu$-almost all trees $\tau\in\Gamma$. Furthermore, we define the \emph{annealed spectral dimension} $d_S$ through
\beq \label{Qdefdsquenched}
\dS = 2\left( 1+\lim_{x\to0} \frac{\log \avg{Q_\tau(x)} }{\log x}  \right),
\eeq
provided that $\avg{Q_\tau(x)}$ diverges as $x\to0$. 


To illustrate the concepts introduced in this subsection consider the following:

\begin{example} The easiest example is the (non-random) tree consisting of a single infinite spine which we denote by $\C{S}$. One has trivially that $|\C{B}_R|=|\bar{\C{B}}_R|=R$ and thus $\dhh=\dhhb=1$. To calculate the first return probability generating function $P_\C{S}(x)$ note that the random walk leaves the root with probability $1$, then at vertex $s_1$ the walker returns to the root with probability $1/2$ or leaves to the right with probability $1/2$ and makes an excursion until its first return to $s_1$. It then returns to the root with probability $1/2$ or does another excursion with probability $1/2$, and so on. Hence, one has for the generating function that
\beq \label{exampleP}
P_\C{S}(x)=\frac{1}{2}  (1-x)\sum_{n=0}^\infty \left( \frac{1}{2} P_\C{S}(x) \right)^n = \frac{1-x}{2-P_\C{S}(x)}.
\eeq
From \eqref{exampleP} it follows that 
\beq
P_\C{S}(x)=1-\sqrt{x}
\eeq
and thus from \eqref{QPrel} one has
\beq
Q_\C{S}(x)=\frac{1}{\sqrt{x}}.
\eeq
Hence, using \eqref{Qdefds} one finds that $\ds=1$ as expected.
\end{example}


\subsection{Relating spectral and Hausdorff dimension}

The main result of this section is the following theorem which we prove in the two subsections below.

\begin{theorem} \label{thmspectral}
For any tree $\tau\in\Gamma_S^\infty$ with a unique infinite spine for which $\dhh$, $\dhhb$ and $\ds$ exist, the following holds
\beq
\frac{2\dhh}{1+\dhh}\leq\ds\leq\frac{2\dhhb}{1+\dhhb}.
\eeq
\end{theorem}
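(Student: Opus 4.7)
The plan is to convert the desired inequalities on $\ds$ into matching bounds on $1-P_\tau(x)$ via the defining identity $Q_\tau(x) = (1-P_\tau(x))^{-1}$ together with the strong criterion (\ref{existenceds}). Concretely, the two halves of the theorem amount to showing that, up to slowly varying corrections,
$$c\, x^{1/(1+\dhh)} \;\le\; 1-P_\tau(x) \;\le\; C\, x^{1/(1+\dhhb)}$$
for all sufficiently small $x>0$. Note the directionality: an \emph{upper} bound on $1-P_\tau(x)$ becomes a \emph{lower} bound on $Q_\tau(x)$ and hence an \emph{upper} bound on $\ds$, so the hull-side inequality $\ds\le 2\dhhb/(1+\dhhb)$ corresponds to the right-hand estimate, and the ball-side inequality $\ds\ge 2\dhh/(1+\dhh)$ to the left-hand estimate.

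For the hull-side bound, I would truncate $\tau$ at spine-level $R$ and work with the finite subgraph $\bar{\C B}_R$, making $s_R$ absorbing. Until the walker on $\tau$ first reaches $s_R$ it is indistinguishable from the walker on $\bar{\C B}_R$, so $Q_\tau(x)$ is at least the corresponding return-generating function for the truncated chain. The commute time between $r$ and $s_R$ is at most $2R\cdot|\bar{\C B}_R|$, since the spine is a series path from $r$ to $s_R$ of effective resistance $R$, and by Kac's return-time formula the expected number of returns to $r$ during one commute is of order $|\bar{\C B}_R|/R \cdot R = |\bar{\C B}_R|$. Picking $R=R(x)$ with $R^{1+\dhhb}\sim 1/x$, so that a typical excursion of length $\sim 1/x$ fits inside one such commute, then yields $Q_\tau(x)\gtrsim |\bar{\C B}_R| \gtrsim x^{-\dhhb/(1+\dhhb)}$ after invoking (\ref{existencedh}), which is equivalent to $1-P_\tau(x)\le C\,x^{1/(1+\dhhb)}$ up to slowly varying factors.

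For the ball-side bound, the same commute-time identity is run in reverse on the subgraph $\C B_R$: every excursion from $r$ that leaves $\C B_R$ must cross an edge at distance $\ge R$, so the resistance from $r$ to the boundary of $\C B_R$ is at least $R$, and the expected number of returns to $r$ before exiting $\C B_R$ is at most of order $|\C B_R|$. Choosing $R=R(x)$ with $R^{1+\dhh}\sim 1/x$ and again applying (\ref{existencedh}) gives $Q_\tau(x)\lesssim |\C B_R|\lesssim x^{-\dhh/(1+\dhh)}$, which is the matching lower bound $1-P_\tau(x)\ge c\,x^{1/(1+\dhh)}$.

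The main obstacle is translating the commute-time heuristic into the generating-function language of Section 2.2, since $P_\tau$ and $Q_\tau$ are first- and all-return generating functions rather than expectations of hitting times. The natural route is to unroll the spine recursion for $P_\tau(x)$: at each spine vertex the generating function decomposes in terms of the first-return generating functions of the outgrowth and of the remaining spine, and iterating this decomposition down to level $R(x)$ lets one transport the volume bounds (\ref{existencedh}) for $|\C B_R|$ and $|\bar{\C B}_R|$ into estimates for $1-P_\tau(x)$. I expect the delicate step to be the hull-side inequality, where one must control the contribution of excursions that reach past $s_R$ and ensure the slowly varying factors from (\ref{existencedh}) propagate through the choice $R=R(x)$ without corrupting the limit in (\ref{Qdefds}); on the ball side the resistance bound $R_{\mathrm{eff}}(r,\partial\C B_R)\ge R$ is automatic in a tree, so the estimate is tighter by construction.
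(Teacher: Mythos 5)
Your overall strategy --- truncate at spine level $R$, bound $Q_\tau$ from below using the hull and from above using the ball, then optimize $R$ as a function of $x$ --- is the same as the paper's, and your stated targets $c\,x^{1/(1+\dhh)}\le 1-P_\tau(x)\le C\,x^{1/(1+\dhhb)}$ are exactly right. But the commute-time heuristic you use to reach them contains a genuine error that breaks both halves. The expected number of returns to $r$ before the walk first hits $s_R$ is $\deg(r)\,R_{\mathrm{eff}}(r,s_R)\approx R$ (the effective resistance, which in a tree is just the distance), \emph{not} of order $|\bar{\C{B}}_R|$; the volume enters only through the \emph{time} this takes (the commute time $\approx 2R\,|\bar{\C{B}}_R|$), which is what fixes the choice $R(x)\sim x^{-1/(1+\dhhb)}$. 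With the correct count one gets $Q_\tau(x)\gtrsim R\sim x^{-1/(1+\dhhb)}$, which is your stated target; with your count you get $Q_\tau(x)\gtrsim|\bar{\C{B}}_R|\sim x^{-\dhhb/(1+\dhhb)}$, a different exponent which would yield $\ds\le 2/(1+\dhhb)$ --- a \emph{false} statement (for generic trees $\dhhb=2$ and $\ds=4/3>2/3$). The same confusion on the ball side gives only $Q_\tau(x)\lesssim|\C{B}_R|\sim x^{-\dhh/(1+\dhh)}$ and hence $\ds\ge 2/(1+\dhh)$, which is strictly weaker than the claimed $2\dhh/(1+\dhh)$ whenever $\dhh>1$. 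Note also the internal inconsistency: $Q_\tau(x)\gtrsim x^{-\dhhb/(1+\dhhb)}$ is not ``equivalent to $1-P_\tau(x)\le C\,x^{1/(1+\dhhb)}$''.

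The paper makes the correct accounting precise without any potential theory: an induction on the recursion \eqref{Piteration} (Lemma \ref{lm:lspine}, using Lemma \ref{Qfinitetree} for the outgrowths) gives $1-P_\tau(x)\le 1/R + x|\bar{\C{B}}_R|$, where the $1/R$ term is exactly the ``$R$ returns before escape'' effect and the $x|\bar{\C{B}}_R|$ term is the killing accumulated in the hull; balancing the two terms gives $R\sim x^{-1/(1+\dhhb)}$ and $Q_\tau(x)\gtrsim R$. Dually, Lemma \ref{lmQup} gives $Q_\tau(x)\le R+2/(x|\C{B}_R|)$, and the same balancing yields $Q_\tau(x)\lesssim R\sim x^{-1/(1+\dhh)}$. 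If you replace your ``number of returns $\approx$ volume'' step by ``number of returns $\approx R$'' and keep the rest of your outline (including the unrolling of the spine recursion, which you correctly identify as the way to make the heuristic rigorous), you recover the paper's argument.
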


In the proof of this  theorem, in the following two subsections, we use results developed in \cite{durhuus:2007} in the context of generic trees and show that they can be employed in a more general context by introducing the hull dimension.  When  assuming existence of the dimensions we will in practice use the stronger criteria (\ref{existencedh}) and (\ref{existenceds}) although the results also easily follow from assuming existence in the usual sense (\ref{defdhh}) and (\ref{Qdefds}).


\subsubsection{Upper bound on the spectral dimension} 
For any tree $\tau\in\Gamma$ one can easily derive the following recursion relation \cite{durhuus:2007} analogous to \eqref{exampleP} as presented in the example above
\beq\label{Piteration}
P_\tau (x)=\frac{1-x}{ k_\tau+1-\sum_{i=1}^{k_\tau} P_{\tau _i}(x)},
\eeq
where the $\tau_1,...,\tau_{k_\tau}$ denote the $k_\tau$ trees meeting the edge from the root to the unique vertex next to it. In \cite{durhuus:2007} the following simple lemma is proven which follows straightforwardly from \eqref{Piteration} by induction
\begin{lemma} \label{Qfinitetree}
For all finite trees $T\in\Gamma'$ one has
\beq
P_T (x) \geq 1-|T|x.
\eeq
\end{lemma}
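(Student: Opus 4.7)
The plan is to induct on the size $|T|$ of the tree, using the recursion \eqref{Piteration} as the inductive step.

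First I would dispose of the base case $|T|=1$, where $T$ consists of the root $r$ and a single neighbour $v$ joined by one edge. Here the walker must step to $v$ (since $\sigma(r)=1$) and then step back to $r$, so directly $P_T(x)=1-x$, matching the bound $1-|T|x$ with equality.

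For the inductive step, let $T\in\Gamma'$ with $|T|\geq 2$ and let $\tau_1,\dots,\tau_{k_T}$ be the subtrees meeting the edge from the root to its unique neighbour. Each $\tau_i$ is a finite tree with strictly smaller size, and crucially $\sum_{i=1}^{k_T}|\tau_i|=|T|-1$ (the missing edge being the one incident to the root). By the induction hypothesis, $P_{\tau_i}(x)\geq 1-|\tau_i|x$. Since each $P_{\tau_i}(x)\leq 1$ (first-return probabilities sum to at most $1$), the denominator in \eqref{Piteration} satisfies
\beq
k_T+1-\sum_{i=1}^{k_T}P_{\tau_i}(x)\;\leq\;1+x\sum_{i=1}^{k_T}|\tau_i|\;=\;1+(|T|-1)x,
\eeq
and is bounded below by $1$, so positive. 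Substituting into \eqref{Piteration} yields
\beq
P_T(x)\;\geq\;\frac{1-x}{1+(|T|-1)x}.
\eeq

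It then suffices to verify the elementary inequality $\frac{1-x}{1+(|T|-1)x}\geq 1-|T|x$, which rearranges to $|T|(|T|-1)x^2\geq 0$ and hence holds for all $x\geq 0$. This closes the induction. The only place requiring any thought is checking that the denominator in \eqref{Piteration} stays positive so that the bound can be inverted without flipping the inequality, but this is immediate from $P_{\tau_i}(x)\leq 1$; no analytic obstacle arises.
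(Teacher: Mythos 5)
Your proof is correct and follows exactly the route the paper indicates: induction on $|T|$ using the recursion \eqref{Piteration}, which is how the lemma is established in the cited reference \cite{durhuus:2007}. The details you supply (the identity $\sum_i|\tau_i|=|T|-1$, the positivity of the denominator via $P_{\tau_i}(x)\leq 1$, and the final elementary inequality) are all sound.
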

If the root of $T$ has degree $k$ the above equation is generalized to $P_T(x) \geq 1- \frac{|T|}{k} x$. To prove the upper bound on the spectral dimension we use the following lemma which is a reformulation of Lemma 7. in \cite{durhuus:2007}
\begin{lemma} \label{lm:lspine}
For all $\tau\in \Gamma^\infty_S$ and $R\geq1$ one has
\beq
P_\tau(x)\;\geq\; 1-\frac{1}{R}  -x |\bar{\C{B}}_R|\,.
\eeq
\end{lemma}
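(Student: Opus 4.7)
My plan is to reduce the bound to Lemma \ref{Qfinitetree} applied to a carefully chosen finite auxiliary tree $\bar\tau$. Set $\bar\tau$ to be the tree obtained from the hull $\bar{\C{B}}_{R-1}$ by adjoining one new edge from the spine vertex $s_{R-1}$ to a fresh leaf $u$. This construction has two useful features. First, $|\bar\tau|=|\bar{\C{B}}_{R-1}|+1\le|\bar{\C{B}}_R|$ (the missing pieces are the outgrowth $\C{A}^R$ and the edge $s_{R-1}s_R$), so Lemma \ref{Qfinitetree} applied to $\bar\tau$ gives $P_{\bar\tau}(x)\ge 1-x|\bar{\C{B}}_R|$. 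Second, the valency of every vertex of $\bar\tau$ coincides with its valency in $\tau$: the new edge $s_{R-1}u$ in $\bar\tau$ plays the role of the forward-spine edge $s_{R-1}s_R$ in $\tau$.

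Using this matching of valencies, I would couple the simple random walks on $\tau$ and on $\bar\tau$ (both started at $r$) so that they agree step-for-step until the first visit to $s_R$ and $u$ respectively. Writing $T_v$ for the first hitting time of $v$ (and $T_r$ for the first return to $r$), the generating function on $\tau$ of first-return walks that avoid $s_R$ then equals the generating function on $\bar\tau$ of first-return walks that avoid $u$. The former is bounded above by $P_\tau(x)$, and the latter equals $P_{\bar\tau}(x)$ minus the generating-function contribution of first-return walks on $\bar\tau$ that visit $u$ before returning to $r$. This subtracted contribution is at most $\Prob_{\bar\tau}(T_u<T_r)$ by the crude bound $(1-x)^{|\omega|/2}\le 1$, so the lemma reduces to showing $\Prob_{\bar\tau}(T_u<T_r)\le 1/R$.

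This last estimate is the main ingredient. Since $\bar\tau$ is a tree, any excursion from the spine into an outgrowth $\C{A}^i$ is almost surely finite, and by the symmetry of each outgrowth excursion the walk conditioned on eventually taking a spine step is equally likely to move forward or backward. The projection of the walk onto the spine is therefore a simple symmetric random walk on $\{0,1,\ldots,R\}$ with a forced first step $0\to 1$ (since $\sigma(r)=1$), and gambler's ruin yields $\Prob_{\bar\tau}(T_u<T_r)=1/R$; equivalently, this is the effective-resistance formula $\Prob_r(T_u<T_r)=(\sigma(r)R_{\mathrm{eff}}(r,u))^{-1}$ together with $R_{\mathrm{eff}}(r,u)=R$ on the tree $\bar\tau$. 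Putting the pieces together gives $P_\tau(x)\ge P_{\bar\tau}(x)-1/R\ge 1-1/R-x|\bar{\C{B}}_R|$, as required. The step to verify carefully is the claim that outgrowth excursions do not bias the spine projection, so that the factor $1/R$ holds independently of the outgrowth structure; on a tree this is immediate from the fact that excursions return to their starting spine vertex with probability one.
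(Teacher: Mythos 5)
Your argument is correct, but it is organized quite differently from the paper's. The paper (following Lemma 7 of Durhuus--Jonsson--Wheater) stays entirely inside the generating-function formalism: it introduces $P_\tau^R(x)$, the first-return generating function restricted to walks avoiding $s_{R+1}$, proves by induction on the recursion \eqref{Piteration} the pointwise bound $P_\tau^R(x)\geq 1-\tfrac{1}{R}-xR-\sum_{i=1}^R(\sigma(s_i)-2)(1-P_{\C{A}^i}(x))$, and then invokes Lemma \ref{Qfinitetree} separately for each outgrowth $\C{A}^i$. You instead transplant the problem onto a single finite auxiliary tree $\bar\tau$ (the truncated hull with a marked leaf $u$ standing in for $s_R$), use the valency-preserving coupling to identify the two restricted generating functions, apply Lemma \ref{Qfinitetree} once to all of $\bar\tau$, and account for the discarded walks by the escape probability $\Prob_{\bar\tau}(T_u<T_r^+)=1/(\sigma(r)R_{\mathrm{eff}}(r,u))=1/R$. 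All the individual steps check out: the identification of the restricted generating functions is exact because every vertex of $\bar\tau$ other than $u$ has the same valency as in $\tau$; the bound $(1-x)^{|\omega|/2}\leq 1$ correctly converts the error term into a hitting probability; and on a tree the dangling outgrowths neither change the effective resistance nor bias the spine projection, so gambler's ruin gives exactly $1/R$ (the case $R=1$ being trivial since $P_\tau(x)\geq 0$). What your route buys is a transparent probabilistic origin for the $1/R$ term (it is literally the probability of escaping to spine-distance $R$ before returning to the root) and the avoidance of the induction; what the paper's route buys is that it never leaves the algebraic recursion \eqref{Piteration} and explicitly isolates the contribution of each outgrowth, which is the form in which the estimate is reused elsewhere in the literature. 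In fact your version yields the marginally sharper bound $P_\tau(x)\geq 1-\tfrac{1}{R}-x(|\bar{\C{B}}_{R-1}|+1)$, though this gains nothing for the application.
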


\begin{proof}
The proof follows directly the proof of Lemma 7 in \cite{durhuus:2007}. Denote by $P_\tau^R(x)$ the generating function for the first return probability of walks which do not visit the vertex $s_{R+1}$ on the spine.
An induction proof using the recursion relation \eqref{Piteration} then easily shows that 
\beq
P_\tau^R(x)\;\geq\; 1-\frac{1}{R} - xR -\sum_{i=1}^R(\sigma(s_i)-2)(1-P_{\C{A}^i}(x))\,,
\eeq
where $\C{A}^i$ denotes the union of the vertex $s_i$ and the finite trees attached to it. Noting that $P_\tau(x)\geq P_\tau^R(x)$, using Lemma \ref{Qfinitetree} and observing that $|\bar{\C{B}}_R|=\sum_{i=1}^R |\C{A}^i|+R$ completes the proof.
\end{proof}

To prove the upper bound in the theorem we note that from the existence of the hull dimension one has that $|\bar{\C{B}}_\tau(R)|\leq R^{\dhhb}\bar{L}(R)$, where $\bar{L}(R)$ varies slowly at $R\to\infty$. Thus one has from the proceeding lemma that
\beq
Q_\tau(x)\geq \frac{R \, }{1+ x R^{\dhhb+1}\bar{L}^{}(R)}
\eeq
Choosing $R=\lfloor x^{-1/(1+\dhhb)} \rfloor$ one has
\beq
Q_\tau(x)\geq \frac{  x^{-1/(1+\dhhb)} \,  \underline{l}(x)}{ 1+\underline{l}(x)}, 
\eeq
where $\underline{l}(x)=1/\bar{L}(\lfloor x^{-1/(1+\dhhb)} \rfloor)$ varies slowly at $x\to0^+$. Provided that $d_s$ exists and from the fact that $\dhhb \geq 1$, this proves that $\ds\leq2\dhhb/(1+\dhhb)$.



\subsubsection{Lower bound on the spectral dimension}

To prove the lower bound on the spectral dimension we exploit the following lemma derived in \cite{durhuus:2007} in the context of generic trees and note that it can be used in a more general situation.
\begin{lemma}\label{lmQup}
For all $\tau\in\Gamma$ one has
\beq
Q_\tau(x)\leq R +\frac{2}{x |\C{B}_\tau(R)|}
\eeq
\end{lemma}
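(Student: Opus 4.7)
The plan is to reprove Lemma~8 of \cite{durhuus:2007} (from which this statement is taken), via a probabilistic split of the Green's function at the root. Set $z=(1-x)^{1/2}$ and interpret
$$Q_\tau(x) = \sum_{t\ge 0} p_t(r,r)\,z^t$$
as the expected number of visits to the root for a simple random walk that is independently killed at each step with probability $1-z$. If $\xi$ is the killing time, then $\Prob(\xi\ge t)=z^t$ and $\mathbb{E}[\xi] = 1/(1-z)\le 2/x$ for $x\in(0,1]$, so
$$Q_\tau(x) = \mathbb{E}_r\bigl[\#\{0\le t\le \xi : \omega(t) = r\}\bigr].$$

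Let $T_R$ be the first hitting time of $\partial\C{B}_R$, and split the count into visits before and after $T_R$. For the first piece I would use electric--network theory: the expected number of visits of the unkilled walk to $r$ before $T_R$ equals $\sigma(r)\,R_{\mathrm{eff}}(r,\partial\C{B}_R)$, and by Rayleigh monotonicity applied to any shortest path of length at most $R$ from $r$ to the boundary, combined with $\sigma(r)=1$, this is bounded by $R$. Since the killed-walk contribution before $T_R$ is dominated by the unkilled one, this yields the additive $R$ term in the lemma.

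For the second piece I would use the total-mass identity $\sum_v \hat{G}(x;r,v) = 1/(1-z)\le 2/x$ for the killed Green's function $\hat{G}(x;r,v)=\sum_{t\ge 0}p_t(r,v)z^t$, combined with reversibility $\sigma(r)\hat{G}(x;r,v)=\sigma(v)\hat{G}(x;v,r)$ and an averaging argument that distributes this total mass over the $|\C{B}_R|$ vertices of the ball, producing a contribution bounded by $2/(x|\C{B}_R|)$.

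The main obstacle is precisely this averaging step, since the tree has arbitrary branching and a naive volume bound fails to give the correct constant. In \cite{durhuus:2007} it is handled by iterating the recursion \eqref{Piteration} through the ball and combining the resulting subtree generating functions by a Cauchy--Schwarz (harmonic-mean) inequality. The structure of that argument is the mirror image of the proof of Lemma \ref{lm:lspine}: there one sums linearly along the spine to produce a $1/R$ term in a lower bound on $P_\tau$, whereas here one sums harmonically over the whole ball to produce a $1/|\C{B}_R|$ term in an upper bound on $P_\tau$, which is equivalent to the claimed upper bound on $Q_\tau(x)=1/(1-P_\tau(x))$.
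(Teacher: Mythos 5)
Your first half is fine: the identification of $Q_\tau(x)$ with the Green's function of a walk killed at rate $1-z$, $z=(1-x)^{1/2}$, the bound $1/(1-z)\le 2/x$, and the resistance bound $\sigma(r)R_{\mathrm{eff}}(r,\partial\C{B}_R)\le R$ for the visits before $T_R$ are all correct, and this piece coincides with the bound $Q^{\Omega_1}_\tau(x)\le R$ in the decomposition that the paper (following \cite{durhuus:2007}) sketches, where $\Omega_1$ is the set of return walks that never leave $\C{B}_R$. The problem is the second piece, which is the entire content of the lemma: you never actually prove that the post-$T_R$ contribution is at most $2/(x|\C{B}_R|)$. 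The total-mass identity $\sum_v \hat{G}(x;r,v)=1/(1-z)$ spreads the mass $2/x$ over \emph{all} of $\tau$, not over $\C{B}_R$, and neither a termwise nor an average comparison with the post-exit Green's function at $r$ is automatic: for $v$ sitting in a large, poorly connected outgrowth inside $\C{B}_R$, the individual $\hat{G}(x;r,v)$ can be far smaller than the quantity you are trying to bound, so the "averaging argument" you invoke is precisely the step that needs a proof. You say so yourself ("the main obstacle is precisely this averaging step") and then defer it to \cite{durhuus:2007}; a proof attempt that names its central step as an unresolved obstacle and cites it away has not proved the statement.

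Moreover, your description of how \cite{durhuus:2007} closes this gap --- iterating the recursion \eqref{Piteration} through the ball and combining subtree generating functions by a harmonic-mean inequality --- conflates this lemma with the mechanism behind Lemma \ref{lm:lspine} (and Lemma \ref{Qfinitetree}), where one does induct on \eqref{Piteration} to get a \emph{lower} bound on $P_\tau$. For the present upper bound on $Q_\tau$ the cited argument instead works directly with the walk decomposition: each return walk in $\Omega_2$ is split at its first exit from $\C{B}_R$, and the resulting first-passage generating functions are compared, using the fact that on a tree first-passage generating functions factorize along geodesics, with walks from $r$ ending at an arbitrary $v\in\C{B}_R$; summing over $v\in\C{B}_R$ and invoking the total-mass identity is what produces the factor $1/|\C{B}_R|$. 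That comparison of generating functions is the missing ingredient you would need to supply to make your outline into a proof.
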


The proof of this lemma can be found in \cite{durhuus:2007}. It uses a decomposition of walks in $\Omega_1$, the set of walks which do not reach further than distance $R$ from the root, and $\Omega_2$, the set of walks which do reach further. Then one can show that $Q_\tau(x)=Q^{\Omega_1}_\tau(x)+Q^{\Omega_2}_\tau(x)$, where  $Q^{\Omega_1}_\tau(x)\leq R$ and $Q^{\Omega_2}_\tau(x)\leq \frac{2}{x |\C{B}_\tau(R)|}$.

To prove the lower bound in the theorem we observe that by existence of the  Hausdorff dimension one has that $|\C{B}_\tau(R)|\geq R^{\dhh}\underline{L}(R)$, where $\underline{L}(R)$ varies slowly at $R\to\infty$. Thus, by Lemma \ref{lmQup}
\beq
Q_\tau(x)\leq R +x^{-1} R^{-\dhh} \underline{L}^{-1}(R)
\eeq
Choosing $R=\lceil x^{-1/(1+\dhh)} \rceil$ (which also optimises the inequality) proves that $\ds\geq2\dhh/(1+\dhh)$.



\subsection{Independent and identically distributed outgrowths}

If one considers random tree ensembles with a unique infinite spine, where the outgrowths from different vertices along the spine are independent and identically distributed ($i.i.d.$), one can make stronger statements about the dimensions of the tree ensemble. Recall, the definition of the hull $\bar{\C{B}}_R$ as the union of the spine from $r$ up to vertex $s_R$ and all finite trees attached to it. Let us furthermore denote by $\C{A}^i$ the union of the vertex $s_i$ and the finite trees attached to it. 
Denote by $X^i_n$ the number of vertices in $\C{A}^i$ at a distance $n$ from $s_i$, e.g.\ $X^i_0=1$. Furthermore, denote by $\C{A}^i_n$ the intersection of $\C{A}^i$ with the ball of radius $n$ centered around $s_i$. We have the following Theorem:

\begin{theorem} \label{thm:dh}
Let $(\Gamma,\nu)$ be a random tree ensemble concentrated on the set of trees with a unique infinite spine $\Gamma^\infty_S$ and $i.i.d.$ outgrowths $\left(\C{A}^i\right)_{i\geq 1}$ on the spine.
\begin{enumerate}
\item [(i)] If for $\alpha\in (0,1)$ and $z\in(0,1)$,
\beq\label{thmcond1}
 \avg{z^{|\C{A}^i|}} = 1-(1-z)^{\alpha} l(1-z),
\eeq
 where $l(x)$ varies slowly at $x\to0^+$, then almost surely
\beq
\dhhb\leq\frac{1}{\alpha},\quad \quad\ds\leq\frac{2}{1+\alpha}
\eeq
provided that $\dhhb$ and $d_s$ exist.
\item [(ii)]If furthermore, 
\beq \label{thmcond2}
\measet{X^i_n>0}\leq \frac{c}{n},
\eeq
where $c>0$ is a constant, then almost surely $d_h$ and $d_s$ exist and
\beq
\dhh=\frac{1}{\alpha},\quad \quad \ds=\frac{2}{1+\alpha}.
\eeq
\end{enumerate}
\end{theorem}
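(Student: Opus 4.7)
The plan is to first translate the generating function hypothesis (\ref{thmcond1}) into a tail estimate on the outgrowth size, then use the classical theory of sums of heavy-tailed i.i.d.\ variables to determine the scaling of $|\bar{\C{B}}_R|$, and finally exploit the height estimate (\ref{thmcond2}) in part (ii) to transfer this scaling from the hull to the ball itself.

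For part (i), Karamata's Tauberian theorem applied to (\ref{thmcond1}) yields the tail asymptotic
\[
\nu(|\C{A}^i|\geq m)\;\sim\;\frac{m^{-\alpha}\,\tilde l(m)}{\Gamma(1-\alpha)}\qquad (m\to\infty)
\]
for a slowly varying $\tilde l$. Since $|\bar{\C{B}}_R|=R+\sum_{i=1}^{R}|\C{A}^i|$ is a sum of i.i.d.\ non-negative random variables whose tail is regularly varying of index $-\alpha$ with $\alpha\in(0,1)$, the generalised central limit theorem places $|\bar{\C{B}}_R|/a_R$ in the domain of attraction of a positive $\alpha$-stable law, with $a_R=R^{1/\alpha}L^{\sharp}(R)$ for some slowly varying $L^\sharp$. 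To upgrade distributional convergence to the almost-sure bound (\ref{existencedh}), I would use Borel--Cantelli along a geometric subsequence $R_k=2^k$ with envelopes $a_{R_k}/\log R_k$ and $a_{R_k}\log R_k$ so that the deviation probabilities are summable, and then fill in the gaps via monotonicity of $R\mapsto |\bar{\C{B}}_R|$. This yields $R^{1/\alpha}\underline L(R)<|\bar{\C{B}}_R|<R^{1/\alpha}\bar L(R)$ eventually a.s.\ for slowly varying $\underline L,\bar L$, hence $\bar d_h=1/\alpha$ a.s. Theorem \ref{thmspectral} then gives $d_s\leq 2\bar d_h/(1+\bar d_h)=2/(1+\alpha)$, since $x\mapsto 2x/(1+x)$ is increasing.

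For part (ii), the upper bound $d_h\leq\bar d_h=1/\alpha$ is automatic from $\C{B}_R\subseteq\bar{\C{B}}_R$, so the substance is the matching lower bound. The idea is truncation: choose $N=R^{1/\alpha}/\log R$ and let $h^i$ denote the height of $\C{A}^i$. On the event $E_i=\{|\C{A}^i|\leq N\}\cap\{h^i\leq R/2\}$ with $i\leq R/2$, the outgrowth $\C{A}^i$ lies entirely within distance $R$ of the root, so $\C{A}^i\subseteq\C{B}_R$ and
\[
|\C{B}_R|\;\geq\;\sum_{i=1}^{\lfloor R/2\rfloor}|\C{A}^i|\,\mathbf{1}_{E_i}\;\geq\;\sum_{i=1}^{\lfloor R/2\rfloor}(|\C{A}^i|\wedge N)\;-\;N\cdot\#\{1\leq i\leq R/2:\,h^i>R/2\}.
\]
Karamata's theorem for regularly varying sums gives $\mathbb{E}[|\C{A}^i|\wedge N]\sim C\,N^{1-\alpha}\tilde l(N)$ and $\mathbb{E}[(|\C{A}^i|\wedge N)^2]\sim C'\,N^{2-\alpha}\tilde l(N)$; Chebyshev's inequality combined with Borel--Cantelli along a dyadic subsequence then concentrates $\sum_{i\leq R/2}(|\C{A}^i|\wedge N)$ near its mean, which is of order $R^{1/\alpha}/(\log R)^{1-\alpha}$. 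For the correction, hypothesis (\ref{thmcond2}) bounds the expected number of tall outgrowths by $\tfrac{R}{2}\cdot\tfrac{2c}{R}=c$, so that term is at most $N$ times a $\mathrm{polylog}(R)$ factor, i.e.\ of order $R^{1/\alpha}/\log R$, a strictly lower-order contribution. Consequently $|\C{B}_R|\geq R^{1/\alpha}\underline L(R)$ eventually a.s., yielding $d_h=1/\alpha$, and a second application of Theorem \ref{thmspectral} gives $d_s\geq 2/(1+\alpha)$, matching the upper bound.

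The main obstacle is calibrating the truncation level $N$ in part (ii): it must be large enough that $\mathbb{E}[|\C{A}^i|\wedge N]$ captures the $R^{1/\alpha}$ scaling up to slowly varying corrections, yet small enough that the variance is negligible compared with the squared mean so that Chebyshev delivers summable tail probabilities. The simultaneous tension with the ``tall-outgrowth'' term forces (\ref{thmcond2}) to be used quantitatively rather than as a mere finiteness statement, and passing to the strong almost-sure form (\ref{existencedh}) requires careful interpolation between dyadic values of $R$ via the monotonicity of $|\C{B}_R|$ and $|\bar{\C{B}}_R|$.
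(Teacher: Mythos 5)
Your outline---Tauberian theorem for the tail of $|\C{A}^i|$, almost-sure control of the i.i.d.\ sum for the hull, then a truncation argument transferring the lower bound from the hull to the ball---matches the paper's strategy, but your technical engine is different and two of your quantitative steps fail as stated. For part (i) only the upper bound $\dhhb\leq 1/\alpha$ is needed; the paper gets it by choosing a slowly varying $L_1$ with $\sum_R \measet{|\C{A}^1|>R^{1/\alpha}L_1(R)}<\infty$ and invoking Feller's 1946 theorem on sums with infinite moments, with no stable-law input. Your route through the $\alpha$-stable domain of attraction can be made to work, but distributional convergence gives nothing almost sure on its own, and your upper envelope $a_{R_k}\log R_k$ does not make the deviation probabilities summable: the one-big-jump heuristic gives a probability of order $R_k\,\measet{|\C{A}^1|>a_{R_k}\log R_k}\asymp(\log R_k)^{-\alpha}=(k\log 2)^{-\alpha}$, and $\sum_k k^{-\alpha}$ diverges for $\alpha<1$. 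You need an envelope growing like $(\log R)^{2/\alpha}$; the conclusion survives but the step as written is incorrect.

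The same calibration problem is fatal, as written, in part (ii), which is the substantive half. With $N=R^{1/\alpha}/\log R$ the truncated-moment asymptotics give $\mathrm{Var}\bigl(\sum_{i\leq R/2}(|\C{A}^i|\wedge N)\bigr)\big/\bigl(\mathbb{E}\sum_{i\leq R/2}(|\C{A}^i|\wedge N)\bigr)^2\asymp N^{\alpha}/(R\,\tilde l(N))\asymp(\log R)^{-\alpha}$, again not summable along $R_k=2^k$, so Chebyshev plus Borel--Cantelli does not close. In addition, you control the tall-outgrowth correction only through its expectation: knowing that the expected number of $i\leq R/2$ with $h^i>R/2$ is at most $c$ does not give an almost-sure polylogarithmic bound on that count valid for all large $R$, which is what the subtracted term $N\cdot\#\{h^i>R/2\}$ requires. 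Both defects are repairable---take $N=(R/(\log R)^2)^{1/\alpha}$ and bound the binomial count of tall outgrowths by $(\log R)^{3/2}$ via Markov along the dyadic sequence---but you have flagged the calibration as ``the main obstacle'' rather than carried it out, and your stated choices do not satisfy it. The paper sidesteps the difficulty entirely: Lemma \ref{lm:boundmomgen} bounds $\avg{z^{|\C{A}^i_n|}}$ by $\avg{z^{|\C{A}^i|}}+\measet{X^i_n>0}$, so an exponential Chebyshev inequality with $t=\lambda^{-1}R^{-1/\alpha}$ applied to $\sum_{i}|\C{A}^i_R|$, together with hypothesis (\ref{thmcond1}), yields the stretched-exponential deviation bound (\ref{part2bound}); condition (\ref{thmcond2}) enters only as an additive $c/R$ inside an $R$-th power, i.e.\ a harmless factor $e^{c}$, and choosing $\lambda=L_2(R)$ so that the exponent is $\sim\tilde c\log R$ with $\tilde c>1$ makes the probabilities summable over all $R$ without any subsequence argument. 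If you wish to keep the second-moment route you must redo the exponent bookkeeping; the Laplace-transform argument is both shorter and the reason the generating-function form of (\ref{thmcond1}) is the natural hypothesis.
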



\subsubsection{Proof of first part of Theorem \ref{thm:dh}}

 Since $|\bar{\C{B}}_R|=\sum_{i=1}^R |\C{A}^i|+R$ is just a sum of $i.i.d.$ random variables $ |\C{A}^i|$ (with possibly infinite variance), one can easily determine a bound on their sum $|\bar{\C{B}}_R|$ and with that a bound on $\dhhb$ almost surely. We begin by observing that by a Tauberian theorem \cite{Feller}, (\ref{thmcond1}) is equivalent to
\beq\label{thmcond1a}
 \measet{|\C{A}^i|>R} \sim R^{-\alpha} L(R),
\eeq
where $L(R)$ varies slowly at $R\to\infty$. Here, we define ``$\sim$'' to mean that the ratio of the two sides tends to one as $R \rightarrow \infty$. By \eqref{thmcond1a} there exists a function $L_1(R)$ which varies slowly at infinity such that
\beq
\sum_{i=1}^\infty \measet{|\C{A}^i|> R^\frac{1}{\alpha} \, L_1(R) }   <\infty.
\eeq
It then follows from a theorem by Feller \cite[Theorem 2]{feller:1946} that  
\beq
\measet{\tau:\sum_{i=1}^R |\C{A}^i| > R^\frac{1}{\alpha} \, L_1(R)\quad  \mathrm{infinitely \, often}}  = 0
\eeq
One then has that for $\nu$-almost all trees there exists a constant $C>1$ and an $R_0$ such that for $R>R_0$ one has
\beq 
 |\bar{\C{B}}_R| < C R^\frac{1}{\alpha} \, L_1(R).
\eeq
This proves that $\dhhb\leq1/\alpha$ almost surely and using Theorem \ref{thmspectral}, one has further $\ds\leq 2/(1+\alpha)$ almost surely.


\subsubsection{Proof of second part of Theorem \ref{thm:dh}}

In this section we prove that there exists a constant $C$ such that,
\beq\label{part2bound}
\measet{\tau: |\C{B}_R| < \lambda R^\frac{1}{\alpha}}  \leq  C e^{-\lambda^{-\alpha} l(\lambda^{-1}R^{-1/\alpha})}
\eeq
for $\lambda$ small enough such that $\lambda^{-\alpha} l(\lambda^{-1}R^{-1/\alpha}) > c$ with $l(x)$ from \eqref{thmcond1}.
 Assuming that (\ref{part2bound}) holds we proceed as follows. By \cite[Theorem 1.5]{seneta:1976}, for any $\tilde{c}\in\mathbb{R}$, there exists a function $L_2(R)$, slowly varying as $R\rightarrow\infty$, which satisfies
\begin {equation} \label{l2slow}
 \lim_{R\rightarrow\infty}\frac{L_2(R)^{-\alpha}\ell(L_2(R)^{-1}R^{-1/\alpha})}{\log(R^{\tilde{c}})} = 1.
\end {equation}
To see this, let $L^\ast$ be a slowly varying function conjucate to  $L(R) = 1/l(R^{-1/\alpha})$ i.e.
\begin {equation*}
 \lim_{R\rightarrow\infty} L^\ast(R) L(R L^\ast(R)) = 1
\end {equation*}
 and choose $L_2(R) = \left(L^\ast(\frac{R}{\log R^{\tilde{c}}}) /\log(R^{\tilde{c}})\right)^{1/\alpha}$.

%

Therefore, by choosing $\lambda = L_2(R)$ which obeys the requirement of $\lambda$ given above and $\tilde{c}>1$ one finds that
\beq
\sum_{R=1}^\infty \measet{\tau: |\C{B}_R| < R^\frac{1}{\alpha} \, L_2(R)}  <\infty.
\eeq
Using the Borel-Cantelli lemma one then has that 
\beq
\measet{\tau: |\C{B}_R|  < R^\frac{1}{\alpha} \, L_2(R) \quad  \mathrm{infinitely\, often}}  = 0.
\eeq
Thus, for $\nu$-almost all trees there exists an $R_0$ such that for $R>R_0$ one has
\beq 
 |\C{B}_R| > R^\frac{1}{\alpha} \, L_2(R).
\eeq
This proves that $\dhh\geq1/\alpha$ almost surely. Together with the results of the previous section and Theorem \ref{thmspectral} one has thus almost surely
\beq
\dhh=\frac{1}{\alpha},\quad \quad \ds=\frac{2}{1+\alpha}.
\eeq

The basic idea to prove \eqref{part2bound} is the following: if the finite outgrowths along the spine die out fast enough, i.e.~if \eqref{thmcond2} holds, then the ball $\C{B}_R$ and hull $\bar{\C{B}}_R$ are close in size and one can use the latter to estimate $\C{B}_R$. Recall that $\C{A}^i_n$ is the intersection of $\C{A}^i$ with the ball of radius $n$ centered around $s_i$. An essential ingredient to measure how close $\C{B}_R$ and  $\bar{\C{B}}_R$ are in size is provided by the following lemma:

\begin{lemma} \label{lm:boundmomgen}
For $z\leq1$ one has for the probability generating functions
\beq
\avg{z^{|\C{A}^i|}}\leq \avg{z^{|\C{A}^i_n|}} \leq \avg{z^{|\C{A}^i|}}+ \measet{X^i_n>0}
\eeq
\end{lemma}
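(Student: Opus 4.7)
The plan is to establish the two inequalities separately, both by elementary pointwise arguments on a single tree $\tau$ before taking expectations under $\nu$.

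For the left-hand inequality I would first observe that by construction $\C{A}^i_n$ is a subgraph of $\C{A}^i$, so $|\C{A}^i_n|\leq |\C{A}^i|$ pointwise on every tree. Since $z\leq 1$ the map $k\mapsto z^k$ is non-increasing on the non-negative integers, whence $z^{|\C{A}^i|}\leq z^{|\C{A}^i_n|}$ tree-by-tree. Integrating against $\nu$ then yields $\avg{z^{|\C{A}^i|}}\leq\avg{z^{|\C{A}^i_n|}}$ directly.

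For the right-hand inequality the key idea is to decompose according to whether the outgrowth $\C{A}^i$ reaches height $n$. On the event $\{X^i_n=0\}$ the outgrowth dies out before radius $n$, so $\C{A}^i_n=\C{A}^i$ and the two exponents agree; on the complementary event $\{X^i_n>0\}$ one uses the trivial bound $z^{|\C{A}^i_n|}\leq 1$. Writing
\[
\avg{z^{|\C{A}^i_n|}}=\avg{z^{|\C{A}^i_n|}\mathbf{1}_{\{X^i_n=0\}}}+\avg{z^{|\C{A}^i_n|}\mathbf{1}_{\{X^i_n>0\}}}
\]
and applying these two bounds term-by-term, the first summand becomes $\avg{z^{|\C{A}^i|}\mathbf{1}_{\{X^i_n=0\}}}$, which is in turn bounded by $\avg{z^{|\C{A}^i|}}$ since the integrand is non-negative, while the second is bounded by $\measet{X^i_n>0}$. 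Adding these two estimates yields exactly the upper bound claimed.

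I do not expect any genuine obstacle; the lemma is a straightforward truncation inequality. The only point worth making explicit is that $\{X^i_n=0\}$ coincides with the event that the entire outgrowth $\C{A}^i$ is contained in the ball of radius $n$ around $s_i$, which is what makes the identity $\C{A}^i_n=\C{A}^i$ hold on that event and allows the decomposition to go through cleanly.
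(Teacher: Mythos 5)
Your proof is correct and follows essentially the same route as the paper: the lower bound via monotonicity of $k\mapsto z^k$ for $z\leq 1$ together with $|\C{A}^i_n|\leq|\C{A}^i|$, and the upper bound by splitting over the event $\{X^i_n>0\}$, using $\C{A}^i_n=\C{A}^i$ on its complement and the trivial bound $z^{|\C{A}^i_n|}\leq 1$ on the event itself. The only cosmetic difference is that the paper decomposes the difference $\avg{z^{|\C{A}^i_n|}-z^{|\C{A}^i|}}$ rather than $\avg{z^{|\C{A}^i_n|}}$ directly, which amounts to the same estimates.
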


\begin{proof} The proof is inspired by ideas of \cite{durhuus:2007}. Firstly, note that $|\C{A}^i|\geq |\C{A}^i_n|$, hence the lower bound follows. For easy notation let us denote the event $\C{E}_n=\{X^i_n>0\}$. Then
\bea
\avg{z^{|\C{A}^i_n|} - z^{|\C{A}^i|}}& =&\int_{\C{E}_n} (z^{|\C{A}^i_n|} - z^{|\C{A}^i|}) d\nu+ \int_{\C{E}^c_n} (z^{|\C{A}^i_n|} - z^{|\C{A}^i|}) d\nu\nn\\
&=& \int_{\C{E}_n} (z^{|\C{A}^i_n|} - z^{|\C{A}^i|}) d\nu \leq \int_{\C{E}_n} z^{|\C{A}^i_n|} d\nu \nn\\
&\leq& \int_{\C{E}_n}d\nu= \measet{X^i_n>0}
\eea
Thus the upper bound follows.
\end{proof}

To prove \eqref{part2bound} we first note that since 
\beq
\C{A}^{1}_{[R/2]}\cup \C{A}^{2}_{[R/2]}\cup...\cup \C{A}^{[R/2]}_{[R/2]} \subset \C{B}_R
\eeq
it suffices to prove that 
\beq \label{toproofnew}
\measet{\tau: \sum_{i=1}^R |\C{A}^{i}_R| < \lambda R^\frac{1}{\alpha} }  \leq 
 C e^{-\lambda^{-\alpha} l(\lambda^{-1}R^{-1/\alpha})}.
\eeq
Using Markov's inequality,  the independence of the $|\C{A}^i_R|$ and Lemma \ref{lm:boundmomgen}  we find that for any $t\in(0,1)$ and $\lambda >0$
\bea
\measet{\tau: \sum_{i=1}^R |\C{A}^{i}_R| < \lambda R^\frac{1}{\alpha} } & \leq& \measet{\tau: \exp{(-t\sum_{i=1}^R |\C{A}^{i}_R|)} > \exp{(-t\lambda R^\frac{1}{\alpha})} } \nn\\
&\leq & e^{t\lambda R^\frac{1}{\alpha}} \langle \prod_{i=1}^R e^{-t|\C{A}^{i}_R|} \rangle_\nu = e^{t\lambda R^\frac{1}{\alpha}} \langle  e^{-t|\C{A}^{i}_R|} \rangle_\nu^R \nn\\
&\leq& e^{t\lambda R^\frac{1}{\alpha}} \left( \avg{e^{-t |\C{A}^i| }} +\measet{X^i_R>0} \right)^R. \label{eq:markov}
\eea
From \eqref{thmcond1} it follows that
\beq\label{momentgenA}
\langle e^{-t |\C{A}^i|} \rangle_\nu \leq 1- t^\alpha l(t) +o(t^\alpha l(t))
\eeq
where $l(t)$ varies slowly as $t\to0^+$. 
 Choose $t= \lambda^{-1}R^{-1/\alpha}$ and $\lambda$ small enough such that
\begin {equation} \label{alphacondition}
\lambda^{-\alpha} l(\lambda^{-1}R^{-1/\alpha}) > c
\end {equation}
with $c$ from (\ref{thmcond2}). Apply (\ref{momentgenA}) and (\ref{thmcond2}) to (\ref{eq:markov}) and get, for a suitable constant $C'$
\bea \nn
\measet{\tau: \sum_{i=1}^R |\C{A}^{i}_R| < \lambda R^\frac{1}{\alpha} } & \leq& C' \left(1-\frac{\lambda^{-\alpha} l(\lambda^{-1}R^{-1/\alpha})-c}{R}\right)^R \\
&\leq&C' e^{-\lambda^{-\alpha} l(\lambda^{-1}R^{-1/\alpha})+c}. \label{expbound}
\eea
%
Thus (\ref{part2bound}) follows from (\ref{expbound}) with $C = C' e^c$.



\section{Conditioned Galton-Watson trees} \label{s:nongeneric}

\subsection{The model}

In this section we apply the results of the previous section to the model of simply generated trees. Simply generated trees can, in most cases of interest, be interpreted as a size conditioned  Galton-Watson process. A Galton-Watson process is a discrete time branching process which starts from a single particle. At each time step the particles branch independently to $k$ other particles with the same probability $p(k)$, $k\geq 0$ referred to as {\it offspring probability}. We denote the generating function of the offspring probabilities (or outdegrees) by $f(z) = \sum_{n=0}^\infty p(n) z^n$. The value of $f'(1)$, the mean number of offspring, determines the survival properties of the process. If $f'(1) = 1$ and excluding the trivial case $p(1)=1$ the process is said to be critical and it dies out with probability one. If $f'(1) < 1$ the process is sub--critical and dies out exponentially fast and if $f'(1) > 0$ the process is super--critical and has a positive probability of surviving forever, see e.g.~\cite{Harris:1963}.	

The  model of simply generated trees has as parameters a sequence of non--negative weights $(w_n)_{n\geq 1}$ referred to as {\it branching weights} and is defined by a (Gibbs) measure on the set of trees with $n$ edges  by
\begin {equation}
  \nu_n(T) = Z_n^{-1}\prod_{v\in V(T)\setminus\{r\}} w_{\sigma(v)},\quad T\in \Gamma_n
\end {equation}
 where $Z_n$ is a normalization
\begin {equation}
 Z_n = \sum_{T\in \Gamma_n} \prod_{v\in V(T)\setminus\{r\}} w_{\sigma(v)}
\end {equation}
often referred to as the finite volume partition function. It is useful to define the generating functions
\begin {equation}
 \mathcal{Z}(\zeta) = \sum_{n=1}^\infty Z_n \zeta^n \quad \mathrm{and} \quad g(z) = \sum_{n=0}^\infty w_{n+1}z^{n}
\end {equation}
which are related by the equation
\begin {equation} \label{eq:tree}
 \C{Z}(\zeta) = \zeta g(\C{Z}(\zeta)),
\end {equation}
see e.g.~\cite{flajolet:2009}. We will denote their radii of convergence by $\zeta_0$ and $\rho$ respectively and we furthermore define $\mathcal{Z}_0 = \lim_{\zeta \rightarrow \zeta_0} \mathcal{Z}(\zeta)$. When $\rho > 0$ the measure $\nu_n$ can equivalently be defined by a Galton-Watson branching process with offspring probabilities
\begin {equation} \label{e:offspring}
p(k) = \zeta_0 w_{k+1} \C{Z}_0^{k-1}
\end {equation}
conditioned to have $n$ edges, see e.g.~\cite{jonsson:2011}. By (\ref{eq:tree}), the mean offspring number can be written as $f'(1) = 1-g(\C{Z}_0)/\C{Z}'(\zeta_0)$ and thus it is clear that the process is either critical or sub--critical. Here we will only consider the critical case i.e.~when $\C{Z}'(\zeta_0) = \infty$ in which case the following theorem holds.
\begin {theorem} \label{thm:convergence}
 For a critical size conditioned Galton-Watson process, the finite volume measures $\nu_n$ converge weakly as $n\rightarrow \infty$ towards a measure $\nu$ which is concentrated on $\Gamma_S^\infty$. The degrees of vertices on the spine are independently distributed by
\begin {equation}
 \phi(k) = \zeta_0 (k-1) w_k \C{Z}_0^{k-2}
\end {equation} 
and the outgrowths from the spine are independent Galton-Watson trees with offspring probabilities (\ref{e:offspring}).  
\end {theorem}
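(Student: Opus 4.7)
The plan is to establish weak convergence through convergence of cylinder probabilities, relying on a ratio asymptotic for the partition function $Z_n$. Because of the equivalence (\ref{e:offspring}), I regard $\nu_n$ throughout as the law of a critical Galton-Watson tree with offspring law $p(k)=\zeta_0 w_{k+1}\C{Z}_0^{k-1}$ (criticality is equivalent to the hypothesis $\C{Z}'(\zeta_0)=\infty$, via $f'(1)=1-g(\C{Z}_0)/\C{Z}'(\zeta_0)$) conditioned to have $n$ edges. Cylinder sets for $\Gamma_S^\infty$ are parametrised by a finite planar rooted tree $T_0$ of height $R$ carrying a distinguished vertex $u\in T_0$ at depth $R$ together with a specified position at $u$ for an extra half-edge; the event $A(T_0,u)$ says that the hull $\bar{\C{B}}_R(\tau)$ equals $T_0$, with $s_R=u$ and $s_{R+1}$ inserted into that marked slot. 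These events generate the Borel $\sigma$-algebra on $\Gamma_S^\infty$ under the natural (local) topology, so it suffices to show $\nu_n(A(T_0,u))\to\nu(A(T_0,u))$ for every such $T_0,u$.

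The first key step is the combinatorial identity
\beq
\nu_n(A(T_0,u))\;=\;\frac{Z_{n-|T_0|}}{Z_n}\;\cdot\;w_{\sigma_{T_0}(u)+1}\prod_{v\in V(T_0)\setminus\{r,u\}}w_{\sigma_{T_0}(v)},
\eeq
obtained by splitting a tree $T\in\Gamma_n$ realising the event $A(T_0,u)$ into the hull $T_0$ and the subtree hanging from $s_{R+1}$: the hull contributes the fixed weight $\prod_{v\in T_0\setminus\{r,u\}} w_{\sigma_{T_0}(v)}$, the extra offspring of $s_R$ on the spine forces the substitution $w_{\sigma_{T_0}(u)}\mapsto w_{\sigma_{T_0}(u)+1}$ at $u$, and summing over the hanging subtree of size $n-|T_0|$ gives the factor $Z_{n-|T_0|}$.

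The second key step is the ratio asymptotic
\beq
\lim_{n\to\infty}\frac{Z_{n-m}}{Z_n}\;=\;\zeta_0^{m}\qquad\text{for every fixed }m\ge 0.
\eeq
Because $\mathcal{Z}(\zeta)$ has radius of convergence $\zeta_0$ and $\mathcal{Z}'(\zeta_0)=\infty$, classical singularity analysis (equivalently, the Otter-Dwass formula $\Prob(|T|=n)=\tfrac{1}{n}\Prob(S_n=n-1)$ together with a local limit theorem for the centred random walk with step law $p(\cdot)-\delta_1$, allowing either finite-variance or power-law tails) gives $Z_n\sim C\,n^{\gamma}L(n)\zeta_0^{-n}$ for a slowly varying $L$ and some $\gamma<-1$, from which the ratio follows. (Aperiodicity of $\mathrm{supp}(p)$ is needed so that $Z_n>0$ eventually; otherwise one works along an arithmetic subsequence.) This is the step I expect to be the main technical obstacle: one has to verify that the relevant singular behaviour of $\mathcal{Z}$ at $\zeta_0$ holds in enough generality to cover both the generic and non-generic critical regimes treated later in Section~\ref{s:nongeneric}.

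Combining the two previous steps yields
\beq
\lim_{n\to\infty}\nu_n(A(T_0,u))\;=\;\zeta_0^{|T_0|}\,w_{\sigma_{T_0}(u)+1}\prod_{v\in V(T_0)\setminus\{r,u\}}w_{\sigma_{T_0}(v)}.
\eeq
The final step is to check that this limit coincides with the probability of $A(T_0,u)$ under the measure $\nu$ described in the statement: spine degrees i.i.d.\ with law $\phi$, independent GW($p$) outgrowths. A direct computation using $\phi(k)/(k-1)=\zeta_0 w_k\C{Z}_0^{k-2}$ (the planar law of degree together with a prescribed spine-child slot) and the formula $\prod_{v\in T}p(k_v)=\zeta_0^{|T|+1}\C{Z}_0^{-1}\prod_{v\in T}w_{\sigma(v)}$ for a GW($p$) subtree attached at a child of a spine vertex shows that all the $\C{Z}_0$ factors cancel and the remaining $\zeta_0$-exponent telescopes to $|T_0|$, matching the display above. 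Since the cylinder probabilities converge and generate the Borel $\sigma$-algebra, $\nu_n\Rightarrow\nu$; the fact that $\nu(A(T_0,u))>0$ for arbitrarily large $R$ together with tightness of the cylinder construction gives that $\nu$ is concentrated on $\Gamma_S^\infty$, completing the proof.
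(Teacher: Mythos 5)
The paper does not actually prove Theorem~\ref{thm:convergence}; it imports it from the literature (Kennedy and Aldous--Pitman for $f''(1)<\infty$, Jonsson--Stef\'ansson and Janson in general, Kesten for the height-conditioned analogue), so there is no internal proof to compare yours against. Your plan is essentially the standard argument from those references: split a tree realising a cylinder event into a fixed hull plus one large subtree hanging from a marked slot, reduce everything to the ratio limit $Z_{n-m}/Z_n\to\zeta_0^{m}$, and match the limiting weights against the measure with spine degrees $\phi$ and GW($p$) outgrowths. The combinatorial identity and the final bookkeeping are correct (up to an off-by-one in the exponent of $\zeta_0$ in your GW-weight formula, which is a convention issue), and you correctly identify the ratio limit as the technical heart. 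Two points deserve emphasis. First, ``classical singularity analysis'' does not give $Z_n\sim Cn^{\gamma}L(n)\zeta_0^{-n}$ for an arbitrary critical weight sequence; the Otter--Dwass/local-limit route works when the offspring law is in the domain of attraction of a stable law (which covers both regimes the paper actually uses, $f''(1)<\infty$ and $w_n\sim n^{-\beta}L(n)$), but full generality, as in Janson's theorem, requires the strong ratio limit theorem $\Prob(S_{n+1}=-1)/\Prob(S_n=-1)\to1$ for aperiodic mean-zero walks. As written, this step is a citation-shaped hole rather than a proof.

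Second, passing from convergence of the numbers $\nu_n(A(T_0,u))$ to weak convergence is not just a remark about generating $\sigma$-algebras. For finite trees the events $A(T_0,u)$ at a fixed height $R$ are neither disjoint nor exhaustive (a finite tree with two long branches lies in several of them; a short tree lies in none), so you must separately verify that the limits $\pi(T_0,u)=\zeta_0^{|T_0|}w_{\sigma_{T_0}(u)+1}\prod_{v\in V(T_0)\setminus\{r,u\}}w_{\sigma_{T_0}(v)}$ sum to $1$ over all $(T_0,u)$ of height $R$ --- this is exactly where criticality enters, via $\C{Z}_0=\zeta_0 g(\C{Z}_0)$ and $\sum_k\phi(k)=\zeta_0 g'(\C{Z}_0)=f'(1)=1$ --- and that the overlap mass vanishes asymptotically (equivalently, that with $\nu_n$-probability tending to $1$ exactly one depth-$R$ subtree is large). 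That computation is also what justifies concentration of $\nu$ on $\Gamma_S^\infty$ and the independence structure along the spine; without it the argument only yields vague convergence. Both gaps are fillable by standard means, but they are the actual content of the theorem, not afterthoughts.
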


The critical model is usually divided into two cases depending on whether $f''(1)$ is finite or infinite. For the case $f''(1) < \infty$, Theorem \ref{thm:convergence} is originally due to Kennedy \cite{Kennedy1975} and later to Aldous and Pitman \cite{Aldous1998}, see also \cite{durhuus:2007}. To our best knowledge, the generalisation which includes the case $f''(1)=\infty$ was first proved in the special case $w_n \sim n^{-\beta}$ by Jonsson and Stef\'ansson \cite{jonsson:2011} and later in full generality by Janson \cite{janson:2011}. The same limiting behaviour was also obtained earlier by Kesten \cite{Kesten1986} for Galton-Watson trees conditioned on their height.

  In the case $f''(1) < \infty$ the trees always belong to the same universality class and have been referred to as {\it generic trees} in the physics literature. In the infinite case there is a range of universality classes depending on the singular behaviour of $f$ and this case has been referred to as {\it critical non-generic}. 
  
 \subsection{Dimensions}
  
 The following theorem holds for the Hausdorff and spectral dimension in the critical case 

\begin{theorem} \label{thmhausdorff}
\
\begin {enumerate}
 \item [(i)] The quenched Hausdorff dimension and spectral dimension of generic trees is almost surely
\begin {equation}
 \dhh=2 \quad \mathrm{and} \quad \ds=4/3
\end {equation}
respectively.
\item [(ii)] For critical non-generic trees with $w_n\sim n^{-\beta}L(n)$, where $\beta\in(2,3]$ and $L$ slowly varying at infinity, the quenched Hausdorff and spectral dimensions are almost surely
\beq
\dhh=\frac{\beta-1}{\beta-2} \quad \mathrm{and} \quad \ds=\frac{2(\beta-1)}{2\beta-3}
\eeq
respectively.
\end {enumerate}

\end{theorem}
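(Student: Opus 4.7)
The plan is to apply Theorem~\ref{thm:dh} to the weak limit measure $\nu$ furnished by Theorem~\ref{thm:convergence}. That theorem ensures $\nu$ is concentrated on $\Gamma^\infty_S$ and that the outgrowths $(\C{A}^i)_{i\ge 1}$ along the spine are i.i.d.\ and branch as Galton--Watson trees, so all that is needed is (a)~to verify hypothesis~\eqref{thmcond1} and identify $\alpha$, and (b)~to verify the height bound~\eqref{thmcond2}.

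For step~(a), I would condition on the spine degree $\sigma(s_i)=k$ and use independence of the $k-2$ outgrowth subtrees at $s_i$. Letting $h(z)$ denote the pgf for the size of a single Galton--Watson tree, so that $h(z)=zf(h(z))$, a short computation using $\phi(k)=\zeta_0(k-1)w_k\C{Z}_0^{k-2}$ together with the identity $f'(w)=\zeta_0\,g'(\C{Z}_0 w)$ collapses the sum over $k$ to
\begin{equation}
\avg{z^{|\C{A}^i|}}\,=\,z\,f'(h(z)).
\end{equation}
Standard singularity analysis of $h(z)=zf(h(z))$ then yields the exponent. In the generic case $f''(1)<\infty$, so $1-h(z)\sim c(1-z)^{1/2}$, which gives $1-\avg{z^{|\C{A}^i|}}\sim c'(1-z)^{1/2}$, hence $\alpha=1/2$. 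In the non-generic case, $p(k)\sim k^{-\beta}L(k)$ with $\beta\in(2,3]$ produces the Tauberian expansion $1-f(1-u)=u-c\,u^{\beta-1}\tilde L(1/u)$; the functional equation gives $1-h(z)\sim(1-z)^{1/(\beta-1)}L_1(1/(1-z))$ and differentiation yields $1-f'(1-u)\sim c(\beta-1)u^{\beta-2}\tilde L(1/u)$. Chaining these gives $1-\avg{z^{|\C{A}^i|}}\sim(1-z)^{(\beta-2)/(\beta-1)}L_2(1/(1-z))$, so $\alpha=(\beta-2)/(\beta-1)$, which collapses to $1/2$ at $\beta=3$.

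For step~(b), $X^i_n>0$ exactly when one of the $k-2$ outgrowth subtrees at $s_i$ survives to height at least $n-1$. Writing $q_n$ for this single-tree survival probability,
\begin{equation}
\measet{X^i_n>0}=\sum_k\phi(k)\bigl[1-(1-q_n)^{k-2}\bigr].
\end{equation}
In the generic case Kolmogorov's theorem gives $q_n\sim C/n$ and $\sum_k k\phi(k)<\infty$, so the required $O(1/n)$ bound is immediate. In the non-generic case, Slack's theorem gives $q_n\sim C\,n^{-1/(\beta-2)}$; here $\phi(k)\sim k^{1-\beta}$ has infinite mean, so I would split the sum at $k^\ast\sim 1/q_n$, bounding the head by $\sum_{k\le k^\ast}\phi(k)(k-2)q_n\asymp q_n(k^\ast)^{3-\beta}\asymp q_n^{\beta-2}$ and the tail by $\sum_{k>k^\ast}\phi(k)\asymp(k^\ast)^{2-\beta}\asymp q_n^{\beta-2}$. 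Both contributions are of order $q_n^{\beta-2}=1/n$, verifying~\eqref{thmcond2}.

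With both hypotheses of Theorem~\ref{thm:dh}(ii) established, I conclude $\dhh=1/\alpha$ and $\ds=2/(1+\alpha)$ almost surely, which after substituting the values of $\alpha$ recovers $\dhh=2,\ \ds=4/3$ in the generic case and $\dhh=(\beta-1)/(\beta-2),\ \ds=2(\beta-1)/(2\beta-3)$ in the non-generic case. The main technical obstacle is the non-generic singularity analysis: extracting sharp singular exponents with the correct slowly-varying corrections for $h(z)$ and then $f'(h(z))$ from the functional equation requires careful Tauberian work, and the infinite-mean spine degree forces the head/tail splitting in step~(b) in place of a simple union bound.
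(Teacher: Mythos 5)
Your proposal is correct and follows essentially the same route as the paper: apply Theorem~\ref{thm:dh} to the limit measure of Theorem~\ref{thm:convergence}, identify $\alpha$ from $\langle z^{|\C{A}^i|}\rangle = f'(W(z))$ (your extra factor of $z$ is harmless) via the square-root singularity in the generic case and the exponent $1/(\beta-1)$ from the functional equation in the non-generic case, and verify the $c/n$ height bound by combining Kolmogorov's/Slack's survival asymptotics with the size-biased spine degree. The only cosmetic difference is that your head/tail truncation at $k^\ast\sim 1/q_n$ is a more pedestrian evaluation of the same quantity $1-f'(1-q_n)$, which the paper reads off directly from the singular expansion $f(z)=z+(1-z)^{\beta-1}l_1(1-z)$ (the latter handles the slowly varying corrections at $\beta=3$ more cleanly than pure power counting).
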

The result on $d_h$ in $(i)$ is proven for instance in \cite{durhuus:2009b} and the result on $d_s$ was originally proven in \cite{Barlow2006,fujii:2008}. The results in part $(ii)$ were  conjectured in the physics literature \cite{
correia:1997gf,burda:2001dx} and later proven in the mathematical literature by Croydon and Kumagai \cite{Croydon:2008}. Below we will show how Theorem \ref{thmhausdorff} easily follows from Theorem  \ref{thm:dh}. While the proof in \cite{Croydon:2008} is slightly more general, as discussed in the introduction, the proof presented here is more intuitive and hopefully makes the result more accessible to physicists.

\begin{proof}
It is a standard result, see e.g.~\cite{Athreya1972}, that for generic trees
\begin {equation}
 \langle z^{|\C{A}_i|}\rangle= 1-c(1-\zeta)^{1/2} + o((1-\zeta)^{1/2})
\end {equation}
and
\beq\label{fnz}
\frac{1}{1-f_n(z)}-\frac{1}{1-z}= \frac{1}{2}f''(1) n +o(n),
\eeq
where $f_n=f\circ \cdots \circ f$, $n$-times, is the $n$-th iterate of the generating function $f(z)$ of the offspring distribution. Denoting by $X_n^{i,j}$ the size of the $n$-th generation of the $j$-th outgrowth from $s_i$, one thus has using \eqref{fnz} that for $n\geq 1$
\begin {equation} \label{eq:GWheight}
\measet{X^{i,j}_n>0}= 1-f_{n-1}(0)=\frac{2}{f''(1)n} (1+ o(1)),
\end {equation}
 Hence, since the outgrowths are independent, one has using Theorem \ref{thm:convergence}
 \bea
 \measet{X^{i}_n>0}&=& 1- f'(1- \measet{X^{i,j}_n>0})\nn\\
 &=&  \frac{2}{ n} (1+ o(1)).
 \eea
 The results then follow from Theorem \ref{thm:dh}. 

Next, consider the case $f''(1) = \infty$. By a Tauberian theorem \cite{Feller}, $w_n\sim n^{-\beta}L(n)$, implies that one can write
\begin {equation} \label{eq:fslow}
 f(z) = z+(1-z)^{\beta-1}l_1(1-z)
\end {equation}
where $l_1$ is slowly varying at zero. Let $\C{W}(\zeta) = \C{Z}(\zeta_0\zeta)/\C{Z}_0$. Using generating function arguments one can deduce from Theorem \ref{thm:convergence} that
\begin {equation} \label{eq:EAi}
 \langle z^{|\C{A}_i|}\rangle = f'(\C{W}(z)).
\end {equation}
Write
\begin {equation} \label{eq:Wslow}
 \C{W}(\zeta) = 1-(1-\zeta)^{\frac{1}{\beta-1}} \chi(1-\zeta).
\end {equation}
By (\ref{eq:tree}) and (\ref{eq:fslow}) we find that $\chi$ has to satisfy
\begin {equation} 
 \lim_{x\rightarrow 0} \chi(x)^{\beta-1} l_1\left(x^{\frac{1}{\beta-1}} \chi(x)\right) = 1
\end {equation}
which, by \cite[Theorem 1.5]{seneta:1976}, entails that $\chi$ is slowly varying at zero. Therefore, by (\ref{eq:fslow}),  (\ref{eq:EAi}),  (\ref{eq:Wslow}) and Tauberian theorems one can write
\begin {equation}
 \langle z^{|\C{A}_i|}\rangle = 1-(1-z)^{\frac{\beta-2}{\beta-1}}l_2(1-z)
\end {equation}
with $l_2$ slowly varying at zero. Thus, by part $(i)$ of Theorem \ref{thm:dh} we have established the upper bounds on $d_h$ and $d_s$. 

For the lower bounds we use part $(ii)$ and consider the survival probability of the outgrowths. Let $X_n^{i,j}$ be the size of the $n$-th generation of the $j$-th outgrowth from $s_i$. It was shown by Slack \cite[Lemma 2]{Slack:1968} that 
\beq
 \left(\measet{X^{i,j}_n>0}\right)^{\beta-2}\, l_1(\measet{X^{i,j}_n>0}) =\frac{1}{(\beta-2)n} (1+o(1)),
\eeq
where $l_1$ is the same slowly varying function as in \eqref{eq:fslow}. Then, since the outgrowths are independent, we find by (\ref{eq:fslow}) that
\begin {eqnarray} 
 \measet{X^i_n>0} &=& 1-f'\left(1-\measet{X^{i,j}_n>0}\right) \nn\\
 &=& (\beta-1)(\measet{X^{i,j}_n>0})^{\beta-2}\, l_1(\measet{X^{i,j}_n>0}) (1+o(1)) \nn\\
 &=& \frac{\beta-1}{\beta-2} \,\frac{1}{n} (1+o(1)).
\end {eqnarray}
which completes the proof.
\end{proof}


\section{The attachment and grafting model} \label{s:ag}
\subsection{The model}
The attachment and grafting (ag) model \cite{stefansson:2011a} is a recent model of randomly growing rooted planar trees which is a special case of a very general tree growth model, referred to as the vertex splitting model. The vertex splitting model was introduced in \cite{david:2009} as a modification of a combinatorial model encountered in the theory of random RNA folding \cite{david:2008}.  The original motivation for studying the special case of the ag--model is that it has a so--called \emph{Markov branching property} which makes it exactly solvable in a strong sense.  It furthermore turns out to have a unique infinite spine which enables us to apply Theorem \ref{thm:dh} to study its Hausdorff and spectral dimension. Using the first part of the theorem we establish what we believe to be tight upper bounds on the dimensions for the full range of parameters. The corresponding lower bounds require information on the extinction probability of the outgrowths and we will provide results on that only for a certain range of parameters.

We give an informal description of the growth rules of the ag--model below but refer to \cite{stefansson:2011a} for a more detailed discussion. The model has two parameters $\alpha,\gamma \in [0,1]$ and a constraint $D$ on the maximum degree of vertices. For easier notation we define
\begin {equation} \label {aandb}
 \eta = \left\{ \begin {array}{cc}
 -\frac{1-\alpha}{D-2}  &  ~ \mathrm{if}~ D< \infty, \\
 1-\gamma  & ~ \mathrm{if}~ D= \infty. \\
 	\end {array}\right. \quad 
\end {equation}
Call the edges which are adjacent to vertices of degree one (besides the root) {\it leaves} and call the other edges {\it internal edges}. Starting from the unique tree with two edges, in each time step the number of edges is increased by one by randomly selecting
\begin {list}{\labelitemi}{\leftmargin=2em}
 \item [(a)] a vertex of degree $k\geq 2$ with relative probability $ \eta k + 1 - 2\eta - \alpha $ and attaching a new edge to it (the $k$ possibilities of attaching chosen uniformly at random) or 
\item [(b)]  an inner edge with relative probability $\alpha$ and dividing it into two edges by grafting a vertex to it or 
\item[(c)] a leaf with relative probability $1-\eta$ and dividing it into two edges by grafting a vertex to it,
\end {list}
see Fig.~\ref{f:growth}.
%
%
\begin{figure} [t]
\centerline{\scalebox{0.33}{\includegraphics{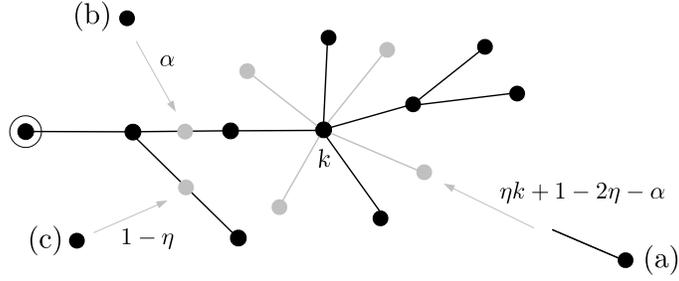}}}
\caption{Growth rules of the ag--model. The root is indicated by a circled vertex.} \label{f:growth}
\end{figure}
%
%
The growth rule generates a sequence of probability measures on $\Gamma$ which we denote by $(\nu_n)_{n\geq 1}$ (not writing explicitly the dependency on $\alpha$, $\gamma$ and $D$). The probability of a tree $T \in \Gamma_{n+1}$ is given by a recursion 
\begin {equation}
\nu_{n+1}(T) = \sum_{T' \in \Gamma_n} \nu_{n}(T') \mathbb{P}(T'\rightarrow T)                                                                                                                                                                   \end {equation}
 where $\mathbb{P}(T'\rightarrow T)$ is the probability of growing $T$ from $T'$ according to the above growth rule.

 
\subsection{Markov branching and convergence of the finite volume measures}
The ag--model has a property called Markov branching, a concept originally introduced by Aldous \cite{ald:1996}. Markov branching means that for any $k\geq 2$ there is a  function $q_k: \mathbb{N}^{k-1} \rightarrow \mathbb{R}_+$ such that for each finite tree $T_0$ which branches at the nearest neighbour of the root to subtrees $T_1,\ldots,T_{k-1}$ it holds that
\begin {equation} \label{e:mb}
 \nu_{|T_0|}(T_0) = q_k(|T_1|,\ldots,|T_{k-1}|) \prod_{i=1}^{k-1} \nu_{|T_i|}(T_i).
\end {equation}
The functions $q_k$ are referred to as {\it the first split distribution}. In the ag--model, the first split distribution is given by \cite{stefansson:2011a}
\begin {eqnarray} \nonumber
 q_k(n_1,\ldots,n_{k-1}) &=& \frac{ \Gamma\left(k-2+\frac{1-\alpha}{\eta}\right)}{\Gamma\left(\frac{1-\alpha}{\eta}\right)\Gamma(k)} \frac{\G{1-\eta}\G{n}}{\eta\Gamma\left(n-\eta\right)}\prod_{i=1}^{k-1} \frac{\eta\Gamma(n_i-\eta)}{\G{1-\eta}\G{n_i+1}}  \\ 
&& \times ~\left(1-\eta - \alpha + \alpha \sum_{i=1}^{k-1}\frac{n_i}{n-n_i}\right)  \label{qfuncgen}
\end {eqnarray}
where $n_1 + \cdots + n_{k-1} = n-1$.

In \cite{stefansson:2011a} it was shown, using (\ref{e:mb}) and (\ref{qfuncgen}), that the sequence $(\nu_n)_{n\geq 1}$ converges weakly as $n \rightarrow \infty$ to a measure $\nu$ which is concentrated on the set of infinite trees with exactly one infinite spine having finite outgrowths. Outgrowths from different vertices on the spine are independent and the probability that a vertex on the spine has degree $k$ and that the finite outgrowths attached to it are $T_1,\ldots,T_{k-2}$, is given by
\begin {equation} \label{muinfgen}
 \mu_k(T_1,\ldots,T_{k-2})\! =\!  \frac{\alpha \Gamma\! \left(k-2+\frac{1-\alpha}{\eta}\right)}{\G{\frac{1-\alpha}{\eta}}\Gamma(k-1)(1+m)}\prod_{i=1}^{k-2} \frac{\eta \Gamma(|T_i|-\eta)\nu_{|T_i|}(T_i)}{\G{1-\eta}\Gamma(|T_i|+1)}, 
\end {equation}
where $m = |T_1| + \cdots + |T_{k-2}|$ (with the convention that $\mu_2(\emptyset) = 1$).


\subsection{Dimensions}
The annealed Hausdorff dimension of the ag--model was calculated in \cite{stefansson:2011a} using a generating function argument. It was shown that $\dH = 1/\alpha$ when $D$ is finite and in the case $D=\infty$ and $\alpha > 1-\gamma$. Note that it is meaningless to consider the annealed Hausdorff dimension when $D= \infty$ and $\alpha \leq 1-\gamma$ since then it follows from (\ref{muinfgen}) that the expected degree of a vertex on the spine is infinite and thus $\dH = \infty$.

Despite this, one might still conjecture that a.s.~$\dhh = 1/\alpha$. This was confirmed in \cite{stefansson:2011a} when $\gamma = 0$ in which case the outgrowths from the spine are single edges and it was furthermore shown that in this case $\ds = 2 \dhh/(1+\dhh) = 2/(1+\alpha)$. Below, we extend these results to a wider range of parameters.
\begin {theorem} \label{thm:agdim}
\begin {enumerate}
 \item For the ag--ensemble $(\Gamma,\nu)$ it holds for any $\alpha$, $\gamma$ and $D$,  that almost surely
\begin {equation} \label{e:agupperbound}
 \dhh \leq \frac{1}{\alpha} \qquad \mathrm{and} \qquad \ds \leq \frac{2}{1+\alpha}. 
\end {equation}
\item Furthermore, for $D=3$, $\frac{1}{2} \leq \alpha \leq 1$ and for $D=\infty$, $\gamma = 0$, we have almost surely
\begin {equation} \label{e:agequal}
 \dhh = \frac{1}{\alpha} \qquad \mathrm{and} \qquad \ds = \frac{2}{1+\alpha}. 
\end {equation}
\end {enumerate}
\end {theorem}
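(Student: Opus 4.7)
Both parts of Theorem \ref{thm:agdim} should follow from direct applications of Theorem \ref{thm:dh} once we exploit the explicit i.i.d.~description of outgrowths given by (\ref{muinfgen}). For the upper bound in part 1, I would compute the probability generating function $\avg{z^{|\C{A}^i|}}$ from (\ref{muinfgen}) and verify that it has the form required by hypothesis (\ref{thmcond1}). The key algebraic trick is to remove the factor $1/(1+m)$ (with $m=|T_1|+\cdots+|T_{k-2}|$) via the identity $z^{1+m}/(1+m)=\int_0^z u^m\,du$, after which the dependence on the individual subtrees factorises through the single generating function
\[
H(u) \;=\; \sum_{n\geq 1} \frac{\eta\,\G{n-\eta}}{\G{1-\eta}\G{n+1}}\,u^n \;=\; 1-(1-u)^\eta,
\]
where the inner tree-dependence has dropped out because $\sum_{T\in\Gamma_n}\nu_n(T)=1$ and the closed form for $H$ comes from the generalised binomial series. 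Performing the outer sum over the spine-vertex degree $k$ produces either $(1-H(u))^{-(1-\alpha)/\eta}$ (for $D=\infty$, by the binomial theorem) or the polynomial $(1-H(u))^{D-2}$ (for finite $D$, since the Pochhammer numerator $\G{k-2+(1-\alpha)/\eta}/\G{(1-\alpha)/\eta}$ vanishes for $k\geq D+1$); in either case the definition (\ref{aandb}) of $\eta$ forces the common exponent to be $\alpha-1$, so that
\[
\avg{z^{|\C{A}^i|}} \;=\; \alpha\int_0^z (1-u)^{\alpha-1}\,du \;=\; 1-(1-z)^\alpha ,
\]
fitting (\ref{thmcond1}) with $l\equiv 1$. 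Part (i) of Theorem \ref{thm:dh}, combined with the elementary inequality $\dhh\leq\dhhb$, then yields (\ref{e:agupperbound}) almost surely.

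For part 2 I need in addition the tail estimate (\ref{thmcond2}). When $D=\infty$, $\gamma=0$ (so $\eta=1$), the coefficients of $H$ are supported on $n=1$, meaning every outgrowth off the spine is a single pendant edge; hence $X^i_n\equiv 0$ for $n\geq 2$ and (\ref{thmcond2}) holds trivially with any $c\geq\measet{X^i_1>0}$. When $D=3$ with $\alpha\in[1/2,1]$ the spine vertex has degree two or three and carries at most one planted subtree, whose distribution is read off from (\ref{muinfgen}) at $k=3$. I would use the Markov branching property (\ref{e:mb}) together with (\ref{qfuncgen}) at $D=3$ to derive a recursive functional equation for the height generating function of this subtree, extract its singular behaviour at $u=1$, and read off the survival probability $\measet{X^i_n>0}$. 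In the range $\alpha\geq 1/2$ the induced second moment of the effective offspring distribution is finite, so a Kolmogorov-type argument should give the required $c/n$ decay. Once (\ref{thmcond2}) is in hand, part (ii) of Theorem \ref{thm:dh} delivers (\ref{e:agequal}).

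The main technical obstacle is this final step for $D=3$: an outgrowth here is a Markov branching tree rather than a plain Galton-Watson tree, so the classical tail estimates of Kolmogorov or Slack do not apply as a black box. The real work consists of unfolding the first-split recursion (\ref{qfuncgen}) at $D=3$, performing the singularity analysis at $u=1$, and pinning down the constant $c$ in (\ref{thmcond2}). The restriction $\alpha\geq 1/2$ in the theorem reflects exactly the threshold at which the effective offspring variance ceases to be finite: outside this range $\measet{X^i_n>0}$ is expected to decay strictly slower than $1/n$, hypothesis (\ref{thmcond2}) fails, and a genuinely heavy-tailed analysis would be required, which is why the matching lower bound in (\ref{e:agequal}) is only claimed on $[1/2,1]$ and the extension to the full parameter range is left as an open problem.
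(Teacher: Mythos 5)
Your part 1 is essentially the paper's own computation: you integrate out the $1/(1+m)$ factor, collapse the sum over subtrees using $\sum_{T\in\Gamma_n}\nu_n(T)=1$ and the identity $\sum_n \frac{\sigma\Gamma(n-\sigma)}{\Gamma(1-\sigma)\Gamma(n+1)}u^n=1-(1-u)^\sigma$, and perform the $k$-sum; this is exactly (\ref{obs}). One small slip: the integral representation gives $\avg{z^{|\C{A}^i|}}=\frac{1}{z}\int_0^z\alpha(1-u)^{\alpha-1}\,du=\frac{1-(1-z)^\alpha}{z}$, not $1-(1-z)^\alpha$ — you dropped the prefactor $1/z$. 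This is harmless, since $\frac{1-(1-z)^\alpha}{z}=1-(1-z)^\alpha\,\frac{1-(1-z)^{1-\alpha}}{z}$ still fits (\ref{thmcond1}) with a slowly varying $l$, but the exact identity you state is wrong. The case $D=\infty$, $\gamma=0$ in part 2 is also handled exactly as in the paper.

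For $D=3$, however, your proposal stops where the paper's proof begins. You correctly identify that one must derive a recursion for a height generating function from the Markov branching property and extract $\measet{X^i_n>0}\leq c/n$, but you explicitly defer ``the real work,'' and that work is the entire content of Lemma \ref{l:agh}: the paper introduces $H_R(\zeta)=\sum_n C_{n,R}\zeta^n$ built from $\nu_n(\{h(T)\leq R\})$, proves the extinction formula $\nu(\{X^i_{R+1}=0\})=\alpha\int_0^1(H_R)^{(\alpha-1)/\eta}d\zeta$, derives the recursion (\ref{ag:hrec}), specialises to $D=3$ (where $\eta=\alpha-1$) to get the recursion (\ref{ag:Frec}) for $F_R(\zeta)=\alpha\int_0^\zeta H_R$, and then — using monotonicity of $F_R$ in $R$, the explicit limit $F_\infty(\zeta)=1-(1-\zeta)^\alpha$, and crucially the nonnegativity of the coefficient $\frac{2\alpha-1}{\alpha}$ — bounds $F_R(1)\geq f(F_{R-1}(1))$ with $f(x)=\frac{1}{2}+\frac{1}{2}x^2$, the offspring generating function of a critical Galton--Watson process, whence $1-f_R(1/2)=\frac{2}{R}(1+o(1))$. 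None of these steps is supplied or even anticipated in your sketch, so the proof of the matching lower bound is genuinely missing. Relatedly, your diagnosis of the restriction $\alpha\geq\frac{1}{2}$ is not the paper's: the threshold arises because the comparison argument needs $\frac{2\alpha-1}{\alpha}\geq 0$ (and $F_0(1)=\alpha\geq\frac{1}{2}$) to run in the right direction, not because the $c/n$ tail bound is expected to fail below $\frac{1}{2}$ — indeed the paper conjectures $\measet{X^i_n>0}\sim 1/(\alpha n)$ for the full parameter range, so your claim that (\ref{thmcond2}) fails for $\alpha<\frac{1}{2}$ contradicts the authors' stated expectation.
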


\begin {proof}
As before, we denote the outgrowths from vertex $s_i$ on the spine by $\C{A}^i$. The upper bounds in (\ref{e:agupperbound}) follow from the simple observation that
\begin {eqnarray} \nonumber
\langle z^{|\C{A}^i|}\rangle_{\nu} &=& \sum_{k=2}^\infty \sum_{m=k-2}^\infty \sum_{n_1+\cdots+n_{k-2} = m} \sum_{T_1 \in \Gamma_{n_1},\ldots,T_{k-2}\in\Gamma_{n_{k-2}}} \!\!\!\!\!\!\!\!\! \mu_k(T_1,\ldots,T_{k-2}) z^m \\
&=&\frac{1}{z}\int_0^z dx \sum_{k=2}^\infty \frac{\alpha \Gamma\left(k-2+\frac{1-\alpha}{\eta}\right)}{\G{\frac{1-\alpha}{\eta}}\Gamma(k-1)} \left[ 1-(1-x)^\eta \right]^{k-2}\nn\\
&=& \frac{1-(1-z)^\alpha}{z}. \label{obs}
\end {eqnarray}
The second equality in (\ref{obs}) is obtained by noting that the innermost sums of $\nu_{n_i}(T_i)$ over $T_i$, $i=1,\ldots,k-2$, yield 1 and then using standard gamma function identities \cite{abramowitz}
\beq \label{gammaidentity}
\sum_{n=1}^\infty \frac{\sigma \Gamma(n-\sigma)}{\Gamma(1-\sigma)\Gamma(n+1)}z^n = 1-(1-z)^\sigma.
\eeq
Finally, an application of the first part of Theorem \ref{thm:dh} yields (\ref{e:agupperbound}).

To obtain the equalities in (\ref{e:agequal}) we study the height distribution of the outgrowths from the spine and apply the second part of Theorem \ref{thm:dh}. We start by noting that the case when $D=\infty$ and $\gamma = 0$, which was already proved in \cite{stefansson:2011a}, follows immediately from Theorem \ref{thm:dh} since in this case the outgrowths from the spine all have height 1. We now focus on the case $D=3$. As before, denote by $X^i_n$ the size of the $n$--th generation of $\C{A}^i$. The result follows from 
\begin {lemma} \label{l:agh}
 When $D=3$ and $\frac{1}{2}  \leq \alpha \leq 1$ it holds that
\begin  {equation}
 \measet{X^i_n>0}\leq \frac{2}{n}(1+o(1)).
\end  {equation}
\end {lemma}
We devote the following subsection to the proof of this lemma.
\end {proof}


\subsection{Proof of Lemma \ref{l:agh}}
We start the proof with a general approach and reduce to the specific parameters stated in Lemma \ref{l:agh} only in the end when needed. In this way we keep the problem of extending the results to all parameters clearly approachable. 

For a finite tree $T$, let $h(T)$ be its height, i.e.~the maximum distance from the root to any vertex. We make the following definitions for more compact notation
\begin {equation}
 c_k = \frac{\G{k-2+\frac{1-\alpha}{\eta}}}{\G{\frac{1-\alpha}{\eta}}\G{k}} \quad\mathrm{and}\quad
 C_{n,R} =  \frac{-\eta \Gamma(n-\eta)\nu_n(\{h(T) \leq R\})}{\Gamma(1-\eta)\Gamma(n+1)}
\end {equation}
with the convention that $\nu_0(\{h(T) \leq R\}) = 1$. The main tool that we will use in the proof is the following generating function
\begin {equation} \label{ag:Hdef}
 H_R(\zeta) = \sum_{n=0}^\infty C_{n,R} \zeta^n.
\end {equation}

\begin {proposition} \label{prop:agfinal}
The probability that $\C{A}^i$ is extinct at level $R+1$ is given by
\begin {equation} \label{e:agfinal}
 \nu(\{X_{R+1}^i=0\}) = \alpha \int_0^1 (H_R(\zeta))^{\frac{\alpha-1}{\eta}} d\zeta. 
\end {equation}
\end {proposition}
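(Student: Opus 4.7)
The plan is to compute $\nu(\{X_{R+1}^i = 0\})$ directly by summing the measure $\mu_k$ from (\ref{muinfgen}) over all outgrowth configurations consistent with the event. Since the root of each outgrowth $T_j$ attached to $s_i$ is identified with $s_i$, the event $\{X_{R+1}^i = 0\}$ is precisely the event that every $T_j$ satisfies $h(T_j) \leq R$. Therefore
\[
\nu(\{X_{R+1}^i = 0\}) \;=\; \sum_{k=2}^{\infty}\;\sum_{\substack{T_1,\ldots,T_{k-2}\in\Gamma'\\ h(T_j)\le R}}\mu_k(T_1,\ldots,T_{k-2}).
\]

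The first key step is to linearise the factor $1/(1+m)$ in (\ref{muinfgen}) via $\frac{1}{1+m}=\int_0^1 \zeta^{m}\,d\zeta$. Pulling the integral outside and exploiting the product structure of the remaining factors, the sums over the individual outgrowths decouple and yield
\[
\nu(\{X_{R+1}^i = 0\}) \;=\; \alpha\int_0^1 d\zeta\,\sum_{k=2}^{\infty}\frac{\Gamma\bigl(k-2+\tfrac{1-\alpha}{\eta}\bigr)}{\Gamma(\tfrac{1-\alpha}{\eta})\,\Gamma(k-1)}\,F_R(\zeta)^{k-2},
\]
where
\[
F_R(\zeta) \;:=\; \sum_{n=1}^{\infty}\frac{\eta\,\Gamma(n-\eta)}{\Gamma(1-\eta)\,\Gamma(n+1)}\,\zeta^{n}\,\nu_n(\{h(T)\le R\}).
\]

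The second key step is to recognise $F_R(\zeta) = 1 - H_R(\zeta)$; this follows by direct inspection from (\ref{ag:Hdef}) once one uses the functional equation $\Gamma(1-\eta)=-\eta\,\Gamma(-\eta)$ to check that $C_{0,R}=1$. Finally, setting $j=k-2$ and invoking the generalised binomial series
\[
\sum_{j=0}^\infty\frac{\Gamma(j+\beta)}{\Gamma(\beta)\,\Gamma(j+1)}\,x^{j} \;=\; (1-x)^{-\beta}
\]
with $\beta=(1-\alpha)/\eta$ and $x=1-H_R(\zeta)$ collapses the $k$-sum to $H_R(\zeta)^{-(1-\alpha)/\eta}=H_R(\zeta)^{(\alpha-1)/\eta}$, which is precisely (\ref{e:agfinal}).

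The main technical obstacle is justifying the interchange of summation and integration together with the convergence of the binomial series. For $\eta>0$ the weights in $F_R$ are non-negative, and using $F_R\le F_\infty = 1-(1-\zeta)^\eta$ via (\ref{gammaidentity}) one has $F_R(\zeta)\in [0,1)$ on $(0,1)$, so every step is absolutely convergent by Tonelli. For $\eta<0$ the analogous coefficients carry the opposite sign, $F_R(\zeta)<0$, and $|F_R(\zeta)|$ can exceed $1$; here the binomial identity must be interpreted as an analytic continuation, which can be justified by a regularisation $F_R\mapsto F_R/(1+\varepsilon)$ followed by $\varepsilon\to 0^+$, or by grouping the $k$-sum into blocks so that the partial sums remain dominated. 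A useful sanity check is to send $R\to\infty$: by (\ref{gammaidentity}) one has $H_\infty(\zeta)=(1-\zeta)^\eta$, and the right-hand side of (\ref{e:agfinal}) becomes $\int_0^1 \alpha(1-\zeta)^{\alpha-1}\,d\zeta = 1$, consistent with the fact that each outgrowth is almost surely of finite height.
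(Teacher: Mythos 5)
Your proof is correct and follows essentially the same route as the paper: linearising $1/(1+m)$ by $\int_0^1\zeta^m\,d\zeta$, decoupling the outgrowth sums into powers of $1-H_R(\zeta)$, and collapsing the $k$-sum with the binomial/gamma identity. The only remark worth adding is that your concern about the $\eta<0$ case is moot: there $\tfrac{1-\alpha}{\eta}=-(D-2)$ is a non-positive integer, so the $k$-sum terminates at $k=D$ and the binomial identity is a polynomial identity, requiring no analytic continuation or regularisation.
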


\begin {proof}
 Using the distribution of the outgrowths given in (\ref{muinfgen}) one can write
\begin {eqnarray} \nonumber
 && \nu(\{X_{R+1}^i=0\}) = \nu(\{h(\C{A}^i)\leq R\}) \\
&&=~\alpha\int_0^1\Bigg[\sum_{k=2}^\infty (k-1)c_k \sum_{m=k-2}^\infty  \sum_{n_1+\cdots+n_{k-2} = m, n_i \geq 1} \prod_{i=1}^{k-2} (-C_{n_i,R})\zeta^{n_i}\Bigg]  d\zeta. 
\end {eqnarray}
The sum over $m$ is first performed giving $(1-H_R(\zeta))^{k-2}$ and the sum over $k$ is then performed using standard gamma function identities similar as above giving (\ref{e:agfinal}).
\end {proof}
Proposition \ref{prop:agfinal} shows that $H_R(\zeta)$ contains all the information needed to calculate the extinction probability of $\C{A}^i$. Next, using the Markov branching property, we will derive a recursion equation for $H_R(\zeta)$ which enables us to extract enough information.

\begin {proposition}
For $R\geq 1$, it holds that
\bea \label{ag:hrec}
 H_R(\zeta)& =& 1+\alpha H_{R-1}(\zeta) \int_0^\zeta (H_{R-1}(y))^{\frac{\alpha-1}{\eta}}dy
 - (\alpha+\eta)\int_0^\zeta (H_{R-1}(y))^{\frac{\alpha+\eta-1}{\eta}}dy
\eea
with the convention that $H_0(\zeta) = 1$.
\end {proposition}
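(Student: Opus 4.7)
The plan is to start from the Markov branching decomposition and convert it, via careful generating function manipulations, into the claimed integro-differential identity. Specifically, for $n\geq 2$ the branching formula \eqref{e:mb} with \eqref{qfuncgen} gives
\begin{equation*}
\nu_n(\{h(T)\leq R\}) = \frac{1}{n}\sum_{k\geq 2} c_k \sum_{\substack{n_1+\cdots+n_{k-1}=n-1\\ n_i\geq 1}} \prod_{i=1}^{k-1}\frac{\eta\Gamma(n_i-\eta)\nu_{n_i}(\{h\leq R-1\})}{\Gamma(1-\eta)\Gamma(n_i+1)}\Bigl(1-\eta-\alpha + \alpha\sum_{i=1}^{k-1}\frac{n_i}{n-n_i}\Bigr),
\end{equation*}
after cancelling $\Gamma(n)/\Gamma(n-\eta)$ against the corresponding factor in $C_{n,R}$. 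Multiplying by $-n\zeta^n$ and summing over $n\geq 2$ turns the left-hand side into $-\zeta H_R'(\zeta) - \eta\zeta$ (the $n=1$ correction arises because the Markov branching description is not valid for a single edge, and $C_{1,R}=-\eta$), while the right-hand side splits into two pieces corresponding to the two terms in the bracket.

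The ``inhomogeneous'' piece multiplying $(1-\eta-\alpha)$ factorises immediately: with $F(\zeta):=1-H_{R-1}(\zeta)=\sum_{n\geq 1}(-C_{n,R-1})\zeta^n$, it equals $(1-\eta-\alpha)\zeta\sum_{k\geq 2}c_k F(\zeta)^{k-1}$. Differentiating the binomial identity $\sum_{k\geq 2}(k-1)c_k z^{k-2}=(1-z)^{-(1-\alpha)/\eta}$, which comes from the same Gamma-function identity \eqref{gammaidentity} used in \eqref{obs}, and integrating once, one rewrites this term as $-\eta\zeta + \eta\zeta H_{R-1}(\zeta)^{(\alpha+\eta-1)/\eta}$. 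The $-\eta\zeta$ conveniently cancels the boundary correction on the left.

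The $\alpha\sum_i n_i/(n-n_i)$ piece is the main obstacle because the denominators couple the branches. The idea is to introduce an auxiliary integral via $\frac{1}{n-n_i}=\int_0^1 y^{\sum_{j\neq i}n_j}\,dy$, which decouples the $n_i$ from the remaining branches. After extracting $n_i$ into a factor $\zeta F'(\zeta)$, using the symmetry to produce the multiplicity $(k-1)$, and summing over $k$ by the same binomial identity as above, one arrives at $-\alpha\zeta H_{R-1}'(\zeta)\int_0^1 H_{R-1}(\zeta y)^{(\alpha-1)/\eta}\,dy$. The substitution $u=\zeta y$ turns this into an integral over $[0,\zeta]$.

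Collecting the two contributions yields
\begin{equation*}
H_R'(\zeta) = -\eta H_{R-1}(\zeta)^{\frac{\alpha+\eta-1}{\eta}} + \alpha H_{R-1}'(\zeta)\int_0^\zeta H_{R-1}(u)^{\frac{\alpha-1}{\eta}}du.
\end{equation*}
Integrating from $0$ to $\zeta$, using $H_R(0)=C_{0,R}=1$, and applying integration by parts to the last term (with $H_{R-1}'(u)\,du$ as the differential and $\int_0^u H_{R-1}(y)^{(\alpha-1)/\eta}\,dy$ as the other factor) produces exactly the difference $\alpha H_{R-1}(\zeta)\int_0^\zeta H_{R-1}(y)^{(\alpha-1)/\eta}dy - \alpha\int_0^\zeta H_{R-1}(y)^{(\alpha+\eta-1)/\eta}dy$. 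Combining with the $-\eta$ contribution and observing $-(\alpha+\eta)$ in front of the second integral finishes the identification with the claimed formula. The only subtle bookkeeping step beyond the generating-function manipulations is the separate treatment of $n=1$, needed so that the constant terms match at $\zeta=0$.
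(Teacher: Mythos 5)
Your argument is correct and follows essentially the same route as the paper: the same Markov-branching recursion for $-nC_{n,R}$, an integral representation to decouple the $n_j/(n-n_j)$ term (the paper's $\zeta^{n-n_j}/(n-n_j)=\int_0^\zeta y^{\,n-n_j-1}dy$ versus your $1/(n-n_j)=\int_0^1 y^{\sum_{i\neq j}n_i}dy$ followed by $u=\zeta y$, which is the same computation), the same Gamma-function identities for the $k$-sum, the same $n=1$ boundary correction, and the same integration by parts, arriving at the identical intermediate equation $H_R'(\zeta)=-\eta H_{R-1}(\zeta)^{(\alpha+\eta-1)/\eta}+\alpha H_{R-1}'(\zeta)\int_0^\zeta H_{R-1}(y)^{(\alpha-1)/\eta}dy$. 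The only blemishes are bookkeeping slips that do not affect the conclusion: the left-hand side of your displayed recursion should be $-C_{n,R}$ rather than $\nu_n(\{h(T)\leq R\})$ if the right-hand side carries the factors $-C_{n_i,R-1}$, and your intermediate expression for the $\alpha$-piece is short one factor of $\zeta$ (it should read $-\alpha\zeta^2 H_{R-1}'(\zeta)\int_0^1 H_{R-1}(\zeta y)^{(\alpha-1)/\eta}dy$ so that the substitution $u=\zeta y$ yields the correct $-\alpha\zeta H_{R-1}'(\zeta)\int_0^\zeta H_{R-1}(u)^{(\alpha-1)/\eta}du$).
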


\begin {proof}
Using the Markov branching property and (\ref{qfuncgen}) we write
\begin {eqnarray}
-n C_{n,R}  &=& \sum_{k=2}^\infty c_k \sum_{n_1+\cdots+n_{k-1} = n-1, n_i\geq 1} \prod_{i=1}^{k-1} (-C_{n_i,R-1} )\nn\\
&&\times~\left(1-\eta-\alpha + \alpha \sum_{j=1}^{k-1}\frac{n_j}{n-n_j}\right).
\end {eqnarray}
Multiply by $\zeta^{n-1}$ and sum from $n=2,\ldots\infty$ to get
\begin {eqnarray}
 &&-(H_R'(\zeta)+\eta) = \sum_{k=2}^\infty c_k \sum_{n=k}^\infty \sum_{n_1+\cdots+n_{k-1} = n-1, n_i\geq 1} \prod_{i=1}^{k-1} (-C_{n_i,R-1}) \zeta^{n_i}\nn \\
&&\times~\left(1-\eta-\alpha + \alpha \sum_{j=1}^{k-1}\frac{n_j}{n-n_j}\right) \nn\\
&& = (1-\eta-\alpha) \sum_{k=2}^\infty c_k \left(-\sum_{n=1}^\infty C_{n,R-1}\zeta^n\right)^{k-1}\nn \\
&& -~\alpha\left(\frac{d}{d\zeta}\sum_{n=1}^\infty C_{n,R-1}\zeta^n\right)\int_{0}^\zeta\sum_{k=2}^\infty (k-1)c_k \left(-\sum_{n=1}^\infty C_{n,R-1}y^n\right)^{k-2} dy.
\end {eqnarray}
Using the definition of $H_R$ and performing the sum over $k$ yields
\begin {eqnarray}
 &&H_R'(\zeta) =-\eta \left(H_{R-1}(\zeta)\right)^{\frac{\alpha+\eta-1}{\eta}} + \alpha H'_{R-1}(\zeta)\int_{0}^\zeta \left(H_{R-1}(y)\right)^{\frac{\alpha-1}{\eta}}dy.
\end {eqnarray}
Integrating this equation and using integration by parts on the second term on the right hand side finally yields the result (\ref{ag:hrec}).
\end {proof}
We now reduce to the case $D=3$ in which case $\eta = \alpha -1 \leq 0$. The recursion (\ref{ag:hrec}) is then greatly simplified. We define $F_R(\zeta) = \alpha \int_0^\zeta H_R(y)dy$ and then $\nu(\{X_{R+1}^i=0\}) = F_R(1)$. Next we integrate (\ref{ag:hrec}) to get
\begin {equation} \label{ag:Frec}
 F_R(\zeta) = \alpha \zeta + \frac{1}{2} F_{R-1}(\zeta)^2-\frac{2\alpha-1}{\alpha}\int_0^{\zeta}\int_0^{y} \left(F_{R-1}'(x)\right)^2 dx dy.
\end {equation}
One can see directly from the definition of $H_R$ (\ref{ag:Hdef}) that when $\eta \leq 0$, $F_R$ and all its derivatives are increasing in $R$. One can furthermore deduce from (\ref{ag:Hdef}) that 
\begin {equation}
F_\infty(\zeta):=\lim_{R\rightarrow\infty}F_R(\zeta) = 1-(1-\zeta)^\alpha. 
\end{equation}
Therefore, under the assumption that $\frac{1}{2} \leq \alpha \leq 1$ we may insert $F_\infty(\zeta)$
 into the integral in (\ref{ag:Frec}) to get the inequality
\begin {equation}
F_R(\zeta) \geq \frac{1}{2}+\frac{1}{2}\left(F_{R-1}(\zeta)\right)^2-\frac{1}{2}(1-\zeta)^{2\alpha}. 
\end {equation}
Evaluated at $\zeta = 1$ this can be written as
\begin {equation} \label{ag:FfF}
 F_R(1) \geq f(F_{R-1}(1))
\end {equation}
with $f(x) = \frac{1}{2}+\frac{1}{2}x^2.$  We can interpret $f$ as a generating function of the offspring probabilities $p(0) = p(2) = 1/2$ of a critical Galton-Watson process, cf. Section \ref{s:nongeneric}. Let $f_R$ be the $R$--th iterate of $f$ i.e. $f_R = f\circ \cdots \circ f$, $R$ times. Now, $f$ is an increasing function and therefore, by repeatedly applying (\ref{ag:FfF}) one gets
\begin {equation}
 \nu(\{X_{R+1}^i=0\}) = F_R(1) \geq f_{R}(F_0(1)) \geq f_{R}(1/2)
\end {equation}
where in the last step we used that $F_0(1) = \alpha \geq 1/2$. It then follows from \eqref{fnz} that $f_{R}(1/2) = 1-\frac{2}{R}(1+o(1))$. This concludes the proof of Lemma \ref{l:agh}.

\section{Discussion}

We developed relatively simple methods, relying on generating function arguments, for calculating the Hausdorff and spectral dimension of trees with a unique infinite spine resulting in Theorems \ref{thmspectral} and \ref{thm:dh}, where the latter applies when the outgrowths along the spine are independent and identically distributed. These methods were applied to two models of random trees of very different nature, demonstrating the versatility of our results.

The first application of Theorem \ref{thm:dh} concerns non-generic critical trees which are a special case of simply generated trees. The values of the Hausdorff and spectral dimension of the non--generic, critical, size conditioned Galton--Watson trees were conjectured by mathematical physicists 15 years ago \cite{correia:1997gf,burda:2001dx} but only recently proved by mathematicians \cite{Croydon:2008}. We used the opportunity here to communicate these results to the physics community as well as providing a simple alternative proof. 

We would like to point out that these results complete the study of the dimensionality of the different phases of simply generated trees. Theorems \ref{thm:convergence} and \ref{thmhausdorff} concern critical Galton--Watson processes. As was mentioned in Section \ref{s:nongeneric}, the simply generated trees can also correspond to sub--critical Galton--Watson processes. In the sub--critical case, the finite volume measures converge weakly to a measure concentrated on the set of trees with a finite spine ending with a vertex of infinite degree. The length of the spine has a geometric distribution and the outgrowths from the spine are finite, independent, sub--critical Galton--Watson processes. This convergence theorem was originally proved in \cite{jonsson:2011} in the case $w_n \sim n^{-\beta}$ and later in full generality in \cite{janson:2011}. 

Due to the presence of a vertex of infinite degree it clearly follows that almost surely $d_s = \infty$ since a random walker will eventually hit the infinite degree vertex and similarly $d_h = d_H = \infty$. However, it was shown in \cite{jonsson:2011} that when $w_n\sim n^{-\beta}$, the annealed spectral dimension is finite and $d_S = 2(\beta-1)$. This is due to the fact that the fluctuations of the outgrowths from the spine serve to slow the random walker down on the way to the infinite degree vertex. Note that in this case it does not hold that $d_S = 2d_H/(1+d_H)$ as in the other phases due to the absence of the infinite spine.

We summarize the above discussion in the phase diagram in Figure \ref{f:phase} where we consider the case when $w_1$ and $\beta$ are the free parameters and $w_n\sim n^{-\beta}$. This choice of parameters allows us to access all phases and explore the full range of dimensions in each phase, see \cite{jonsson:2011} for a more detailed explanation.

\begin{figure} [t]
\centerline{\scalebox{0.5}{\includegraphics{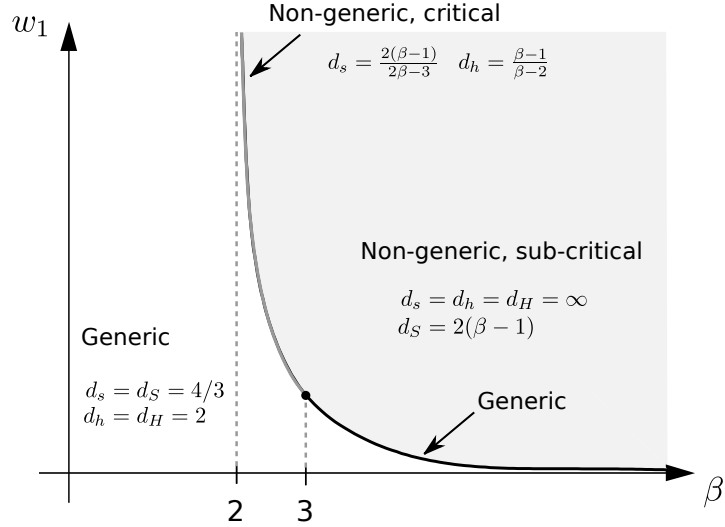}}}
\caption{A phase diagram for the size conditioned Galton-Watson trees, having free parameters $w_1$ and $\beta$, where $w_n \sim n^{-\beta}$. A critical line separates the generic phase from the non--generic phase. For the values $2 < \beta \leq 3$ on the critical line (grey line) the offspring probability distribution has infinite variance and therefore this case corresponds to non--generic, critical trees.} \label{f:phase}
\end{figure}

The second application of Theorem \ref{thm:dh} concerns the attachment and grafting model which is a special case of the vertex splitting model. We give a novel proof that for the parameter range $D=3$, $\frac{1}{2} \leq \alpha \leq 1$ and for $D=\infty$, $\gamma = 0$, one has almost surely $d_h = 1/\alpha$ and $d_s = 2/(1+\alpha)$. This proves part of a previous conjecture \cite{stefansson:2011a}.

It remains an open problem to prove that the quenched Hausdorff and spectral dimension of the ag--model are almost surely $d_h = 1/\alpha$ and $d_s = 2/(1+\alpha)$ for the full range of the parameters $\alpha$, $\gamma$ and $D$. The only missing ingredient in the proof is to generalize Lemma \ref{l:agh} to the full range of parameters. We formulate this as the following:
\begin{conjecture} For the ag--ensemble $(\Gamma,\nu)$ it holds for any $\alpha$, $\gamma$ and $D$, that there exists a constant $c>0$ such that 
\begin  {equation}
 \measet{X^i_n>0}\leq \frac{c}{n}(1+o(1)).
\end  {equation}
\end{conjecture}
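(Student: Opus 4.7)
The plan is to extend the strategy of Lemma \ref{l:agh} to the full parameter range by analyzing the recursion (\ref{ag:hrec}) for $H_R$ without the simplification $\eta = \alpha - 1$ special to $D=3$. Define, for general $\eta \neq 0$,
\begin{equation}
\tilde F_R(\zeta) := \alpha \int_0^\zeta \left(H_R(y)\right)^{\frac{\alpha-1}{\eta}} dy,
\end{equation}
so that, by Proposition \ref{prop:agfinal}, $\tilde F_R(1) = \nu(\{X_{R+1}^i = 0\})$ is exactly the extinction probability at level $R+1$. Proving the conjecture is equivalent to showing $1 - \tilde F_R(1) \leq c/R$ for some $c > 0$. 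A direct computation using the gamma function identity (\ref{gammaidentity}) gives $H_\infty(\zeta) := \lim_{R\to\infty} H_R(\zeta) = (1-\zeta)^\eta$, hence $\tilde F_\infty(\zeta) = 1-(1-\zeta)^\alpha$ in all cases; so extinction occurs almost surely and only the \emph{rate} is in question.

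The next step is to convert (\ref{ag:hrec}) into a recursion for $\tilde F_R$ itself, applying integration by parts to the term $H_{R-1}(\zeta)\int_0^\zeta H_{R-1}(y)^{(\alpha-1)/\eta}dy$, and then to evaluate at $\zeta = 1$. One expects an identity of the schematic shape $\tilde F_R(1) = 1 - G(\tilde F_{R-1})$, where $G$ is an integral of $H_{R-1}^{(\alpha+\eta-1)/\eta}$ over $[0,1]$. Replacing $H_{R-1}$ by its limit $H_\infty$ in the error term $G$ --- which is the move that produces the clean inequality $F_R(1) \geq \tfrac{1}{2} + \tfrac{1}{2}F_{R-1}(1)^2$ in the $D=3$, $\alpha\geq\tfrac12$ case --- should yield an inequality $\tilde F_R(1) \geq f(\tilde F_{R-1}(1))$, where $f$ is the probability generating function of a critical Galton--Watson offspring distribution with finite second moment depending explicitly on $\alpha$ and $\eta$. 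The classical estimate (\ref{fnz}) then iterates this into $1 - \tilde F_R(1) \leq \frac{2}{f''(1)R}(1+o(1))$, as required.

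The main obstacle is to justify the replacement of $H_{R-1}$ by $H_\infty$ in the error term, and more fundamentally to establish the inequality in the correct direction, across the full parameter range. When $\eta \leq 0$ (finite $D$) the coefficients $C_{n,R}$ are non-negative and $H_R$ is monotone increasing in $R$, which is exactly what drives the $D=3$, $\alpha\geq\tfrac12$ proof; when $\eta > 0$ ($D=\infty$, $\gamma<1$) those coefficients flip sign, monotonicity in $R$ fails, and a different control on $H_{R-1}-H_\infty$ is needed. A plausible substitute is a bootstrap argument in which one takes $1-\tilde F_R(1) \leq c/R$ as an induction hypothesis, feeds it into the error term of the recursion, and verifies that the induction closes with the same $c$. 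For parameters where the coefficient analogous to $(2\alpha-1)/\alpha$ in (\ref{ag:Frec}) becomes negative (for instance $\alpha < \tfrac12$ in $D=3$), the elementary inequality used above reverses direction, so one would instead need to bound $G$ from above --- likely by $\tilde F_\infty$ plus a slowly varying correction --- and identify a critical pgf $f$ with finite variance producing the same $1/R$ rate. Getting uniform control in $(\alpha,\gamma,D)$ up to the boundaries where the variance blows up is where I expect the real difficulty to lie.
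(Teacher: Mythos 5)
You should first note that this statement is explicitly a \emph{conjecture}: the paper proves the bound only for $D=3$, $\tfrac12\leq\alpha\leq1$ and for $D=\infty$, $\gamma=0$ (Lemma \ref{l:agh}), and leaves the general case open, remarking only that it "could possibly be proven by analysing the recursion relation (\ref{ag:hrec})". There is therefore no proof in the paper to compare against, and the question is whether your sketch actually closes the gap. It does not. Your outline faithfully reproduces the mechanism of the proved case --- Proposition \ref{prop:agfinal}, the recursion (\ref{ag:hrec}), the comparison with a critical Galton--Watson iteration and the rate (\ref{fnz}) --- and your diagnosis of where that mechanism breaks is accurate, but every step that would extend it to new parameters is flagged with "one expects", "a plausible substitute", or "likely", which is to say it is not done.

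Concretely, three things are missing and each is the actual content of the conjecture. First, the inequality $\tilde F_R(1)\geq f(\tilde F_{R-1}(1))$ with $f$ a critical pgf of \emph{finite variance} is asserted rather than derived; without finite variance (\ref{fnz}) is unavailable and, by Slack's theorem, the survival probability of the comparison process would decay strictly slower than $1/n$, so the claimed conclusion would not follow. Nothing in your sketch rules this out across the full $(\alpha,\gamma,D)$ range. Second, when $\eta>0$ the coefficients $C_{n,R}$ for $n\geq1$ are nonpositive and $H_R$ is monotone \emph{decreasing} in $R$, so the substitution of $H_\infty$ into the error term gives a bound in the wrong direction; the bootstrap you propose as a substitute is not carried out, and it is not evident that feeding $1-\tilde F_R(1)\leq c/R$ into a recursion of the quadratic type (\ref{ag:Frec}) returns the same constant rather than a degraded one. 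Third, for $D=3$ with $\alpha<\tfrac12$ the coefficient $(2\alpha-1)/\alpha$ changes sign and you would need an upper, not lower, bound on the double integral of $(F_{R-1}')^2$; you name this problem but do not solve it. In short, the route is the one the authors themselves suggest and your identification of the obstacles is correct, but the proposal is a research plan, not a proof.
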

Furthermore, we expect that one even has the stronger result
$\measet{X^i_n>0} \sim 1/(\alpha n)$. This conjecture could possibly be proven by analysing the recursion relation (\ref{ag:hrec}) to find the critical behaviour of the generating function $H_R(\zeta)$ when $\zeta \rightarrow 1$ and $R \rightarrow \infty$. We hope to return to this proof in the near future. 

At this point we would also like to note that all of the results derived for the ag--model can straightforwardly be extended to Ford's $\alpha$-model \cite{ford:2005} and its generalisation, the $\alpha\gamma$--model \cite{chen:2008}. We leave a more detailed discussion for future work after the proof of the above conjecture.

There are also different applications of the presented work in the field of quantum gravity. Firstly, the results derived in the article for critical non-generic trees are useful to understand in more detail the branched polymer phase of Euclidean quantum gravity and, in particular, in the presence of matter. Furthermore, the critical non-generic trees are described by branching processes whose population size has polynomial growth which is potentially relevant to the analysis of recently introduced multigraph ensembles as models for four-dimensional causal or Lorentizan quantum gravity \cite{multiL,multiA}.

\subsection*{Acknowledgment}
The authors would like to thank Bergfinnur Durhuus, Georgios Giasemidis, Thordur Jonsson and John Wheater for discussions as well as the anonymous referee for useful comments which improved the manuscript. S.\"O.S.\ acknowledges hospitality at the University of Iceland. S.Z.\ acknowledges financial support of STFC grant ST/G000492/1. Furthermore, he would like to thank NORDITA for kind hospitality and financial support for a visit during which this work was initiated.

\addcontentsline{toc}{section}{References}

\begin{thebibliography}{10}

\bibitem{ald:1996}
D.~Aldous, ``Probability distributions on cladograms,'' in {\em Random discrete
  structures}, vol.~76 of {\em IMA Vol. Math. Appl.}, pp.~1--18.
\newblock Springer, 1996.

\bibitem{ford:2005}
D.~J. Ford, ``Probabilities on cladograms: introduction to the alpha model,''
  2005.

\bibitem{david:2008}
F.~David, C.~Hagendorf, and K.~J. Wiese, ``A growth model for {RNA} secondary
  structures,'' {\em J. Stat. Mech.} (2008) P04008,
  \href{http://www.arXiv.org/abs/0711.3421}{{\tt 0711.3421}}.


\bibitem{Ambjorn:1997di}
J.~Ambj{\o}rn, B.~Durhuus, and T.~Jonsson, {\em Quantum geometry. {A}
  statistical field theory approach}.
\newblock No.~1 in Cambridge Monogr. Math. Phys.,. Cambridge University Press,
  Cambridge, UK,
1997.

\bibitem{Sch:1998}
G.~Schaeffer, {\em Conjugaison d'arbres et cartes combinatoires
  al{\'e}atoires}.
\newblock PhD thesis, Universit{\'e} de Bordeaux I, 1998.

\bibitem{leGall2005}
J.-F.~L. Gall, ``Random trees and applications,'' {\em Prob. Surveys} {\bf 2}
  (2005) 245--311.

\bibitem{DiFrancesco:2004qj}
P.~{Di Francesco}, ``2{D} quantum gravity, matrix models and graph
  combinatorics,'' \href{http://www.arXiv.org/abs/math-ph/0406013}{{\tt
  math-ph/0406013}}.

\bibitem{meir:1978}
A.~Meir and J.~W. Moon, ``On the altitude of nodes in random trees,'' {\em
  Canad. J. Math.} {\bf 30} (1978), no.~5, 997--1015.

\bibitem{Athreya1972}
K.~Athreya and P.~Ney, {\em Branching Processes}.
\newblock Springer Verlag, 1972.

\bibitem{stefansson:2011a}
S.~{\"O}. Stef{\'a}nsson, ``Markov branching in the vertex splitting model,''
  {\em J. Stat. Mech.} (2012) P04018 \href{http://www.arXiv.org/abs/1103.3445}{{\tt
  1103.3445}}.

\bibitem{david:2009}
F.~David, M.~Dukes, T.~Jonsson, and S.~{\"O}. Stef{\'a}nsson, ``Random tree
  growth by vertex splitting,'' {\em J. Stat. Mech.} (2009) P04009,
  \href{http://www.arXiv.org/abs/0811.3183}{{\tt 0811.3183}}.

\bibitem{Angel:2002ta}
O.~Angel and O.~Schramm, ``Uniform infinite planar triangulations,'' {\em Comm.
  Math. Phys.} {\bf 241} (2003) 191--213,
  \href{http://www.arXiv.org/abs/math/0207153}{{\tt math/0207153}}.

\bibitem{coulhon:2005}
M.~T.~Barlow, T.~Coulhoun, and T.~Kumagai, ``Characterization of sub-gaussian
  heat kernel estimates on strongly recurrent,'' {\em Communications on Pure
  and Applied Mathematics} {\bf 58} (2005) 1642--1677.

\bibitem{Barlow2006}
M.~T.~Barlow and T.~Kumagai, ``Random walk on the incipient infinite cluster on trees'', {\em Illinois J. Math.} {\bf 50} Number 1-4 (2006) 33--65.

\bibitem{Croydon:2008}
D.~Croydon and T.~Kumagai, ``Random walks on {G}alton-{W}atson trees with
  infinite variance offspring distribution conditioned to survive,'' {\em
  Electronic Journal of Probability} {\bf 13} (August, 2008) 1419--1441.

\bibitem{fujii:2008}
I.~Fujii and T.~Kumagai, ``Heat kernel estimates on the incipient infinite
  cluster for critical branching processes,'' {\em Proceedings of
  German-Japanese symposium in Kyoto 2006} {\bf B6} (2008) 85--95.
  
  
  \bibitem{barlow:2009}
M.~T. Barlow and R.~Masson, ``Spectral dimension and random walks on the two
  dimensional uniform spanning tree,'' 2009.
  

\bibitem{Ambjorn:1997jf}
J.~Ambj{\o}rn, D.~Boulatov, J.~L. Nielsen, J.~Rolf, and Y.~Watabiki, ``The
  spectral dimension of 2{D} quantum gravity,'' {\em JHEP} {\bf 02} (1998) 010,
  \href{http://www.arXiv.org/abs/hep-th/9801099}{{\tt hep-th/9801099}}.

\bibitem{Jonsson:1997gk}
T.~Jonsson and J.~F. Wheater, ``The spectral dimension of the branched polymer
  phase of two-dimensional quantum gravity,'' {\em Nucl. Phys.} {\bf B515}
  (1998) 549--574, \href{http://www.arXiv.org/abs/hep-lat/9710024}{{\tt
  hep-lat/9710024}}.

\bibitem{Ambjorn:2005db}
J.~Ambj{\o}rn, J.~Jurkiewicz, and R.~Loll, ``Spectral dimension of the
  universe,'' {\em Phys. Rev. Lett.} {\bf 95} (2005) 171301,
  \href{http://www.arXiv.org/abs/hep-th/0505113}{{\tt hep-th/0505113}}.

\bibitem{durhuus:2009b}
B.~Durhuus, T.~Jonsson, and J.~F. Wheater, ``On the spectral dimension of
  causal triangulations,'' {\em J. Stat. Phys.} {\bf 139} (2009) 859--881,
  \href{http://www.arXiv.org/abs/0908.3643}{{\tt 0908.3643}}.

\bibitem{Coulhon2000}
T.~Coulhon, ``Random walks and geometry on infinite graphs,'' in {\em Lecture
  notes on analysis on metric spaces}, L.~Ambrosio and F.~S. Cassano, eds.
\newblock 2000.

\bibitem{durhuus:2007}
B.~Durhuus, T.~Jonsson, and J.~F. Wheater, ``The spectral dimension of generic
  trees,'' {\em J. Stat. Phys.} {\bf 128} (2007), no.~5, 1237--1260,
  \href{http://www.arXiv.org/abs/math-ph/0607020}{{\tt math-ph/0607020}}.

\bibitem{Feller}
W.~Feller, {\em An introduction to probability theory and its applications},
  vol.~2.
\newblock Wiley, 1966.

\bibitem{feller:1946}
W.~Feller, ``A limit theorem for random variables with infinite moments,'' {\em
  American Journal of Mathematics} {\bf 68} (1946), no.~2, 257--262.

\bibitem{seneta:1976}
E.~Seneta, {\em Regularly varying functions}.
\newblock Springer Verlag, 1976.

\bibitem{Harris:1963}
T.~E. Harris, {\em The theory of branching processes}.
\newblock Springer-Verlag, Berlin, 1963.

\bibitem{flajolet:2009}
P.~Flajolet and R.~Sedgewick, {\em Analytic combinatorics}.
\newblock Cambridge University Press, Cambridge, 2009.

\bibitem{jonsson:2011}
T.~Jonsson and S.~{\"O}. Stef{\'a}nsson, ``Condensation in nongeneric trees,''
  {\em J. Stat. Phys.} {\bf 142} (2011) 277,
  \href{http://www.arXiv.org/abs/1009.1826}{{\tt 1009.1826}}.

\bibitem{Kennedy1975}
D.~P. Kennedy, ``The {G}alton-{W}atson process conditioned on the total
  progeny,'' {\em J. Appl. Probability} {\bf 12} (1975) 800--806.

\bibitem{Aldous1998}
D.~Aldous and J.~Pitman, ``Tree-valued {M}arkov chains derived from
  {G}alton-{W}atson processes,'' {\em Ann. Inst. H. Poincar{\'e} Probab.
  Statist.} {\bf 34} (1998) 637--686.

  

\bibitem{janson:2011}
S.~Janson, ``Simply generated trees, conditioned {G}alton--{W}atson trees,
  random allocations and condensation,'' {\em Preprint} (2011)
  \href{http://www.arXiv.org/abs/1112.0510}{{\tt 1112.0510}}.
  
  \bibitem{Kesten1986}
 H. Kesten, ``Subdiffusive behavior of random walk on a random cluster,'' {\em Ann. Inst. H. Poincar{\'e} Probab.
  Statist.} {\bf 22} (1986) 425�487.

\bibitem{correia:1997gf}
J.~D. Correia and J.~F. Wheater, ``The spectral dimension of non-generic
  branched polymer ensembles,'' {\em Phys. Lett.} {\bf B422} (1998) 76--81,
  \href{http://www.arXiv.org/abs/hep-th/9712058}{{\tt hep-th/9712058}}.

\bibitem{burda:2001dx}
Z.~Burda, J.~D. Correia, and A.~Krzywicki, ``Statistical ensemble of scale-free
  random graphs,'' {\em Phys. Rev.} {\bf E64} (2001) 046118,
  \href{http://www.arXiv.org/abs/cond-mat/0104155}{{\tt cond-mat/0104155}}.

\bibitem{Slack:1968}
R.~Slack, ``A branching process with mean one and possibly infinite variance,''
  {\em Z. Wahrscheinlichkeitstheorie verw. Geb.} {\bf 9} (1968) 139--145.

\bibitem{abramowitz}
M.~Abramowitz and I.~A. Stegun, {\em Handbook of Mathematical Functions, With
  Formulas, Graphs, and Mathematical Tables}.
\newblock Dover Publications, 1974.

\bibitem{chen:2008}
B.~Chen, D.~Ford, and M.~Winkel, ``A new family of Markov branching trees: the
  alpha-gamma model,'' 2008.

\bibitem{multiL}
G.~Giasemidis, J.~F. Wheater, and S.~Zohren, ``Dynamical dimensional reduction
  in toy models of 4{D} causal quantum gravity,'' {\em Preprint} (2012)
    \href{http://www.arXiv.org/abs/1202.2710}{{\tt 1202.2710}}.

\bibitem{multiA}
G.~Giasemidis, J.~F. Wheater, and S.~Zohren, ``Multigraph models for causal
  quantum gravity and scale dependent spectral dimension,''{\em Preprint} (2012)
    \href{http://www.arXiv.org/abs/1202.6322}{{\tt 1202.6322}}.  

\end{thebibliography}

\def\cprime{$'$}
\providecommand{\href}[2]{#2}\begingroup\raggedright\endgroup

\end{document}